\documentclass{article}
\usepackage[T1]{fontenc}
\usepackage[utf8]{inputenc}
\usepackage{csquotes}
\usepackage{url}
\usepackage[hidelinks]{hyperref}
\usepackage[centertags]{amsmath} \usepackage{amssymb} 
\usepackage{amsthm} \usepackage{amsfonts} \usepackage[margin=3.5cm]{geometry} 
\usepackage[round,authoryear]{natbib}
\usepackage{graphicx} \usepackage{tikz} \usetikzlibrary{arrows.meta,positioning,calc,fit} \usepackage{subcaption} \usepackage{authblk} \usepackage{xcolor} \usepackage{algorithm}
\usepackage{algpseudocode}
\usepackage{setspace}
\usepackage{placeins}
\usepackage{booktabs}
\usepackage{tabularx}

\newtheorem{theorem}{Theorem} \newtheorem{lemma}[theorem]{Lemma} \newtheorem{corollary}[theorem]{Corollary} \theoremstyle{definition}  \theoremstyle{remark} 

\usepackage{float}\usepackage{textcase}

\hypersetup{
  pdftitle={Regime-Switching Stochastic Epidemics: Joint Analysis of Extinction Time and Outbreak Size},
  pdfauthor={Vasileios E. Papageorgiou, Irene Votsi, Samis Trevezas},
  pdfkeywords={finite-population epidemic, continuous-time Markov chain, Markovian regime switching, state-dependent intervention, extinction time, final size}
}
\definecolor{mypink1}{rgb}{0.858, 0.188, 0.478}

\begin{document}

\onehalfspacing

\title{Evaluating the Impact of Epidemic Control via State-Dependent Markovian Switching Modeling}

\author[1]{Vasileios E. Papageorgiou%
\thanks{Corresponding author. Email: \texttt{vpapageor@math.uoa.gr};
ORCID: 0000-0002-8131-3484.}}

\author[2]{Irene Votsi%
\thanks{Email: \texttt{Eirini.Votsi@univ-lorraine.fr};
ORCID: 0000-0002-7397-2025.}}

\author[1,3]{Samis Trevezas%
\thanks{Email: \texttt{strevezas@math.uoa.gr};
ORCID: 0000-0003-2262-8299.}}

\affil[1]{Department of Mathematics, National and Kapodistrian University
of Athens, Panepistimiopolis, Athens, 15784, Greece.}

\affil[2]{LIEC, CNRS, Universit\'e de Lorraine,
F-57000 Metz, France.}

\affil[3]{MICS Laboratory, CentraleSup\'elec,
Universit\'e Paris-Saclay, 3 Rue Joliot Curie,
Gif-sur-Yvette, 91190, France.}
\date{} \maketitle

\begin{abstract}
\noindent We develop an exact finite-population stochastic framework for SIR epidemics evolving under Markovian switching between intervention
regimes. The epidemic state is augmented by a finite phase component,
allowing transmission, recovery, and direct immunity-acquisition rates
to depend on the active regime. Phase-transition intensities
may depend on the current epidemic state, so that policy
escalation can react to the number of infectious individuals.
Exploiting the monotonicity of the susceptible compartment, we derive
level-wise recursions for the joint Laplace--Stieltjes transform and
probability generating function of the extinction time and the number
of infections generated before extinction. These recursions yield the
infection-count distribution, conditional extinction-time transforms,
and mixed moments linking epidemic duration and infection burden,
while replacing a large global linear system with small phase-level
solves.

The framework is illustrated using weekly mpox incidence data from
Luxembourg. A baseline one-phase SIR model is calibrated by maximum likelihood under a Poisson observation 
model. The calibrated baseline is then used for conditional comparisons
of fixed control regimes, early versus delayed strict intervention,
vaccination-supported control, and state-dependent escalation. The
results show how switching mechanisms affect both the total number of
infected individuals and the extinction time, including their
dispersion. Since the switching mechanisms are specified
rather than estimated from the intervention history, the results are
conditional model-based comparisons rather than
estimates of the historical effects of interventions in Luxembourg.

\noindent
{Keywords: finite-population SIR epidemic; Markovian regime switching;
continuous-time Markov chain; state-dependent intervention;
extinction time; epidemic final size}
\\

\noindent 
{MSC 2020: 60J27, 60J28, 92D30}

\end{abstract}


\section{Introduction} \label{sec1}
Recent epidemic crises have stimulated a broad range of methods for describing and forecasting disease transmission \citep{Papageorgiou2025r}. In finite populations, however, the comparison of intervention mechanisms requires more than a projected mean trajectory: both the number of individuals ultimately infected and the time until extinction are random. A useful stochastic framework should therefore characterize infection burden and epidemic duration jointly, while allowing intervention intensity to change as the outbreak evolves.

Foundational deterministic models typically build on the SIS or SIR structure \citep{Cooper2020}, with the aim of capturing the core dynamics of infection and recovery. 
Well-known generalizations include the SEIR
and SEIRS models
, while extensions may include quarantined
individuals, vaccination, or open populations characterized by
migration
\citep{YALADANDA2022101052,Kiss2024,Papageorgiou2026d}. Epidemic models have also been employed to describe social dynamics, with emphasis on the transmission of criminality and radicalization \citep{Santoprete2019,Sooknanan2023,Papageorgiou2026g}. However, these formulations are predominantly deterministic and therefore cannot fully represent the intrinsic uncertainty in epidemic processes.

A large body of work embeds interventions as time-dependent controls in differential equation models and derives candidate policies via Pontryagin’s maximum principle, typically validated through numerical simulations \citep{Zamir2021}. Deterministic epidemic models are also frequently combined with local or global sensitivity analyses to identify influential transmission and intervention parameters before formulating control strategies \citep{Lu2023b}. Applied work estimates intervention impact by calibrating deterministic transmission models to data and running counterfactual simulations \citep{Flountzi2022}.
These approaches are valuable for designing and comparing prespecified controls, but they generally do not provide the joint finite-population distribution of epidemic duration and infection burden when the intervention regime changes randomly or in response to the epidemic state.

Stochastic epidemic models are especially informative for small populations or for settings where outcomes are sensitive to demographic, environmental, or transmission heterogeneity \citep{Allen2017,Britton2010}. Numerous studies employ low-dimensional continuous-time Markov chain (CTMC) models to efficiently describe the dynamics of epidemics. Examples include the maximum number of infections in heterogeneous-contact SIS systems \citep{Economou2015} and in standard SIS models \citep{Artalejo2010b}, as well as extreme-event behavior in a two-strain SIS model \citep{Almaraz2019}. Further work derives stochastic descriptors that provide exact measures of disease spread \citep{Artalejo2013}, while  \citet{Gmez-Corral2023} extend the exact reproduction number to SIS with vertical transmission, separating contributions from contacts and infected newborns in the resulting distribution. Matrix-form expressions have been derived for several stochastic descriptors, such as the infection time of a susceptible individual, the total numbers of infections and deaths, and the maximum and total number of hospitalizations, considering stochastic SIRD and SIHRD schemes \citep{Papageorgiou2024b,Papageorgiou2023a,Papageorgiou2024d}. Recent work has also combined particle filtering with stochastic compartmental models to dynamically estimate epidemic descriptors, including in the SPIR setting \citep{Papageorgiou2026SPIR}. The dynamics of fundamental epidemic models in random environments have also been studied \citep{Artalejo2013b}, and discrete-time counterparts of these CTMC formulations are also available \citep{Gamboa2018, Papageorgiou2025c, Gmez-Corral2021}.
Extinction time and final epidemic size are especially informative in this setting: the former measures how long transmission persists, whereas the latter measures cumulative burden. Their dependence may distinguish intervention mechanisms that appear similar when only marginal means are compared.

A closely related line of work concerns finite-population SIR epidemics with Markov-modulated event mechanisms. \citet{ArtalejoGomezCorral2010} proposed a state-dependent Markov-modulated mechanism in which an auxiliary finite-state phase process governs event occurrences. This construction allows non-Poisson event streams and dependence between successive inter-event times to be represented while preserving a finite-dimensional Markov description. \citet{AlmarazGomezCorral2018} incorporated this mechanism into a finite-population SIR epidemic, using Markov-modulated infection and removal processes to accommodate non-exponential waiting-time distributions and different correlation structures. Their analysis considers outbreak duration, final epidemic size, and the number of secondary infections. In that framework, the modulating phases are introduced primarily to enrich the temporal structure of infection and removal events. The present construction has a different interpretation and objective: the phase $J(t)$ represents the active intervention regime, simultaneously determines the transmission, recovery, and direct immunity-acquisition intensities, and may switch at rates that depend explicitly on the current epidemic state $(S(t),I(t))$. Thus, rather than using modulation mainly to generate correlated event times, we use it to represent random or epidemic-responsive intervention changes and to derive the joint distribution of extinction time and infection burden through a level-wise recursion.

The contributions are twofold. First, we formulate a finite-population stochastic SIR model in which the intervention phase jointly modulates transmission, recovery, and direct immunity acquisition, while the phase-transition intensities may depend explicitly on the current numbers of susceptible and infectious individuals. Relative to the preceding Markov-modulated formulations, the phase is interpreted as an intervention regime and is used to represent uncertain policy escalation, relaxation, and vaccination-supported immunity within a single construction. Second, we develop an exact recursive method for the joint distributional analysis of extinction time and infection burden. By exploiting the monotone decrease of the susceptible population, the proposed recursions avoid a large global linear system and reduce the computation to small phase-level linear systems.

The resulting recursive scheme yields the extinction-time transform and moments, the distributions of new and total infections, conditional extinction-time quantities, and mixed descriptors linking duration and burden. It therefore permits full distributional comparisons of intervention mechanisms, including cases in which marginal means are similar but uncertainty or dependence differs.

To illustrate the method, we use weekly mpox incidence data from Luxembourg to calibrate a baseline one-phase SIR model and then evaluate alternative intervention mechanisms conditionally on the fitted parameters. The numerical scenarios compare fixed control regimes, early versus delayed strict intervention, vaccination-supported control, and state-dependent escalation. Because all scenarios share the same calibrated epidemic parameters, differences in their outcome distributions can be attributed, within the model, to the intervention mechanism specified in each experiment. The switching intensities and intervention effects are specified rather than estimated from the historical policy process. The numerical study should therefore be interpreted as a conditional comparison of model mechanisms, not as a causal evaluation of interventions implemented in Luxembourg.

The remainder of the paper is organized as follows. 
Section~\ref{sec2} presents the state-dependent Markovian regime-switching SIR model and the recursive formulas for the joint analysis of extinction time and infection burden. 
Section~\ref{sec3} applies the method to weekly mpox incidence data from Luxembourg and compares the proposed intervention scenarios. 
Section~\ref{sec4} discusses the main methodological findings, limitations, and possible extensions. 
Appendix~\ref{app:proofs} contains the proofs of the main results.


\section{Model formulation and distributional analysis}
\label{sec2}

This section develops the finite-state continuous-time Markov chain and its distributional analysis. The active intervention regime modulates transmission, recovery, and direct immunity acquisition, while transitions between regimes may depend on the current numbers of susceptible and infectious individuals. The central result is a level-wise recursion for the joint Laplace--Stieltjes transform--probability generating function of the extinction time and the number of infections generated before extinction. By exploiting the non-increasing behavior of the susceptible population, the recursion replaces a global linear system over the full transient state space with a sequence of smaller phase-level systems. The same construction yields the infection-count distribution, extinction-time transforms and moments, conditional quantities, and mixed descriptors linking epidemic duration and infection burden. These quantities are obtained exactly up to numerical linear-system solution; only the continuous extinction-time densities used in Section~\ref{sec3} require numerical Laplace-transform inversion. State-independent phase transitions are obtained as a special case.

\subsection{Regime-switching SIR process}
\label{subsec:model}

We first give the finite-state construction of the model. The usual SIR population balance is kept, but the epidemic process is augmented by a phase coordinate describing the active intervention regime. The phase modifies the epidemic transition intensities, while transitions between phases may depend on the current epidemic state. Thus, intervention changes can be specified so as to react to the observed epidemic burden, while the pair formed by the epidemic state and the intervention phase remains Markovian.

We work with a closed population of size $N$ and with $P$ possible intervention phases. At time $t\geq0$, $S(t)$, $I(t)$ and $R(t)$ denote the numbers of susceptible, infectious and removed or effectively immune individuals, respectively. Hence
\begin{equation*}
S(t)+I(t)+R(t)=N,\qquad t\geq0.
\end{equation*}
The removed class contains recovered individuals, as well as susceptible individuals who have acquired perfect immunity, for instance through vaccination.

To avoid confusion between the intervention-regime process, the number $P$ of phases, and probability notation, we denote the active intervention phase by $J(t)$. It takes values in $S_p=\{p:1\leq p\leq P\}$. Since $R(t)=N-S(t)-I(t)$, the process can be written as
\begin{equation*}
\mathcal{X}=\{X(t)=(S(t),I(t),J(t)):t\geq0\}.
\end{equation*}
We assume that $X(0)=(N-i_0,i_0,p_0)$ and $R(0)=0$, where $1\leq i_0\leq N$ and $1\leq p_0\leq P$. The quantities derived below are nevertheless defined for every admissible transient state.

For $p'\neq p$ and for each admissible epidemic state $(s,i)$, let $\lambda_{p,p'}(s,i)\geq0$ be the transition intensity from phase $p$ to phase $p'$ when the epidemic state is $(s,i)$. We assume that these intensities are finite. Set
\begin{equation*}
\lambda_{p,p}(s,i)=-\sum_{\substack{p'=1\\p'\neq p}}^{P}\lambda_{p,p'}(s,i),\qquad 1\leq p\leq P,
\end{equation*}
and write
\begin{equation*}
\mathbf{Q}_P(s,i)=\bigl(\lambda_{p,p'}(s,i)\bigr)_{1\leq p,p'\leq P}.
\end{equation*}
For fixed $(s,i)$, $\mathbf{Q}_P(s,i)$ is the generator governing instantaneous changes of the intervention phase. Since it may vary with $(s,i)$, the phase coordinate alone need not be Markovian; the joint process $\mathcal{X}$ is Markovian. The state-independent model is obtained by taking
\begin{equation*}
\lambda_{p,p'}(s,i)=q(p,p'),\qquad p'\neq p,
\end{equation*}
for all admissible $(s,i)$.

The epidemic part of the state space is $S_s=\{(s,i):0\leq s\leq N-i_0,\;0\leq i\leq N-s\}$, and the full state space is
\begin{equation*}
S=S_s\times S_p=\{(s,i,p):0\leq s\leq N-i_0,\;0\leq i\leq N-s,\;1\leq p\leq P\},
\end{equation*}
with $|S|=P(N-i_0+1)(N+i_0+2)/2$. For the level construction, put
\begin{equation*}
L(s,i)=\{(s,i,p):1\leq p\leq P\},\qquad L(s)=\bigcup_{i=0}^{N-s}L(s,i),\qquad S=\bigcup_{s=0}^{N-i_0}L(s).
\end{equation*}
An infection or a direct immunity-acquisition event moves the process from level $L(s)$ to level $L(s-1)$, whereas a recovery or a phase change leaves $s$ unchanged. Thus, $S(t)$ is non-increasing.

The entries of the infinitesimal generator $\mathbf{Q}=(q_{(s,i,p),(s',i',p')})$ are
\begin{equation}
\label{eq:generator}
q_{(s,i,p),(s',i',p')}=
\begin{cases}
\dfrac{b_p}{N}\,s\,i,&(s',i',p')=(s-1,i+1,p),\\[1ex]
\gamma_p\,i,&(s',i',p')=(s,i-1,p),\\[1ex]
\psi_p\,s,&(s',i',p')=(s-1,i,p),\\[1ex]
\lambda_{p,p'}(s,i),&(s',i',p')=(s,i,p'),\quad p'\neq p,\\[1ex]
-q_{(s,i,p)},&(s',i',p')=(s,i,p),\\[1ex]
0,&\text{otherwise},
\end{cases}
\end{equation}
where
\begin{equation*}
q_{(s,i,p)}=\dfrac{b_p}{N}\,s\,i+\gamma_p\,i+\psi_p\,s+\sum_{\substack{p'=1\\p'\neq p}}^{P}\lambda_{p,p'}(s,i),
\end{equation*}
is the total transition intensity out of $(s,i,p)$. A transition in~\eqref{eq:generator} is omitted when its stated destination does not belong to $S$.

For phase $p$, $b_p\geq0$ is the transmission coefficient, $\gamma_p\geq0$ the recovery intensity per infectious individual and $\psi_p\geq0$ the direct immunity-acquisition intensity per susceptible individual. Thus, in state $(s,i,p)$, the total infection, recovery and direct immunity-acquisition intensities are $b_p\,s\,i/N$, $\gamma_p\,i$ and $\psi_p\,s$, respectively. If $\psi_p$ is interpreted as a vaccination intensity, protection is assumed to be immediate and to last until the end of the outbreak. Imperfect or waning protection is not represented by this SIR structure.

The model does not impose an ordering on $b_1,\ldots,b_P$. In applications, a phase representing stronger contact reduction may be assigned a smaller value of $b_p$. The functions $\lambda_{p,p'}(s,i)$ specify how interventions are introduced, intensified or relaxed as the epidemic state changes. Setting selected functions $\lambda_{p,p'}$ equal to zero gives one-way or prohibited phase changes. Figure~\ref{fig:scheme} shows a three-phase example; the labels $q(p,p')$ correspond to the state-independent special case.

\begin{figure}[!htbp]
\centering
\captionsetup[subfigure]{justification=centering}

\begin{subfigure}[t]{0.7\textwidth}
\centering
\begin{tikzpicture}[
    >=Latex,
    font=\small,
    comp/.style={
        rectangle,
        rounded corners=4pt,
        minimum width=1.05cm,
        minimum height=1.05cm,
        align=center,
        thick,
        font=\Large\bfseries
    },
    flow/.style={-Latex, thick},
    lab/.style={font=\footnotesize, fill=white, inner sep=1pt}
]
\node[comp, draw=blue!60!black, fill=blue!20] (S) at (0,0) {$S$};
\node[comp, draw=red!60!black, fill=red!20] (I) at (2.55,0) {$I$};
\node[comp, draw=green!50!black, fill=green!25] (R) at (5.10,0) {$R$};
\draw[flow] (S) -- node[below=5pt, lab] {$\dfrac{b_pSI}{N}$} (I);
\draw[flow] (I) -- node[below=5pt, lab] {$\gamma_p I$} (R);
\draw[flow] (S.east) to[out=46,in=134] node[pos=0.50, above=2pt, lab] {$\psi_p S$} (R.west);
\node[
    draw=black!35,
    rounded corners=5pt,
    inner sep=21pt,
    fit=(S)(I)(R),
    label={[font=\small]above:{Epidemic dynamics in regime $p$}}
] {};
\end{tikzpicture}
\caption{Within-regime epidemic dynamics.}
\end{subfigure}

\vspace{0.9em}

\begin{subfigure}[t]{0.64\textwidth}
\centering
\begin{tikzpicture}[
    >=Latex,
    font=\small,
    phase/.style={
        rectangle,
        rounded corners=6pt,
        minimum width=2.55cm,
        minimum height=0.92cm,
        align=center,
        thick
    },
    switch/.style={-Latex, thick},
    lab/.style={font=\footnotesize, fill=white, inner sep=1pt}
]
\node[phase, draw=green!45!black, fill=green!15] (P1) at (0,2.05)
    {\textbf{Phase 1}\\No measures};
\node[phase, draw=orange!85!black, fill=orange!25] (P2) at (-2.85,0)
    {\textbf{Phase 2}\\Mild measures};
\node[phase, draw=red!70!black, fill=red!15] (P3) at (2.85,0)
    {\textbf{Phase 3}\\Strict measures};

\draw[switch, bend right=14] (P1.south west) to node[pos=0.33, above left=-1pt, lab] {$q(1,2)$} (P2.north east);
\draw[switch, bend right=14] (P2.north east) to node[pos=0.72, below right=-1pt, lab] {$q(2,1)$} (P1.south west);
\draw[switch, bend left=14] (P1.south east) to node[pos=0.33, above right=-1pt, lab] {$q(1,3)$} (P3.north west);
\draw[switch, bend left=14] (P3.north west) to node[pos=0.72, below left=-1pt, lab] {$q(3,1)$} (P1.south east);
\draw[switch] (P2.east) -- node[above, lab] {$q(2,3)$} (P3.west);
\draw[switch] ($(P3.west)+(0,-0.34)$) -- node[below=3pt, lab] {$q(3,2)$} ($(P2.east)+(0,-0.34)$);
\node[
    draw=black!35,
    rounded corners=5pt,
    inner sep=7pt,
    fit=(P1)(P2)(P3),
    label={[font=\small]above:{Markovian regime process}}
] {};
\end{tikzpicture}
\caption{Regime-switching mechanism.}
\end{subfigure}

\caption{Regime-switching SIR model with three intervention phases. In the general formulation, the phase-transition intensities may depend on the current epidemic state; the displayed labels $q(p,p')$ correspond to the state-independent special case.}
\label{fig:scheme}
\end{figure}

For $1\leq p\leq P$, set $l_a^{(p)}=\{(s,0,p):0\leq s\leq N-i_0\}$ and define
\begin{equation*}
S_a=\bigcup_{p=1}^{P}l_a^{(p)},\qquad S_T=S\setminus S_a.
\end{equation*}
The set $S_a$ is closed for the joint process, although its individual states need not be absorbing, since phase changes and direct immunity-acquisition events may continue after $I(t)$ has reached zero. The extinction time is
\begin{equation*}
T=\inf\{t\geq0:X(t)\in S_a\}.
\end{equation*}
The epidemic process is stopped at $T$.

Throughout Section~\ref{sec2}, all transition intensities are finite and
\begin{equation}
\label{eq:positive-recovery}
\gamma_*:=\min_{1\leq p\leq P}\gamma_p>0.
\end{equation}
Let $\mathbb{P}_x$ and $\mathbb{E}_x$ denote probability and expectation for the process started from $x\in S$. If $x=(s,i,p)$, conditioning on $(s,i,p)$ means conditioning on $X(0)=x$.

\begin{lemma}
\label{lem:absorption}
Under~\eqref{eq:positive-recovery}, $T<\infty$ almost surely for every initial state in $S_T$. Moreover, there exists $\eta_T>0$ such that
\begin{equation*}
\sup_{(s,i,p)\in S_T}\mathbb{E}\left[\exp\{\eta T\}\,\middle|\,(s,i,p)\right]<\infty,\qquad 0\leq\eta<\eta_T.
\end{equation*}
In particular, every positive integer moment of $T$ is finite.
\end{lemma}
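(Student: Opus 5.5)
We reduce the statement to the standard fact that a finite-state continuous-time Markov chain whose absorbing class is reachable from every transient state has exponentially tailed absorption times. The chain $\mathcal{X}$ lives on the finite set $S$ and is stopped at $T$. We therefore consider the sub-generator $\mathbf{Q}_T$ obtained by restricting $\mathbf{Q}$ to $S_T$. Then $\mathbb{P}(T>t\mid x)=\mathbf{e}_x^{\top}\exp(\mathbf{Q}_T t)\mathbf{1}$. It suffices to show that the absorption probability is bounded below uniformly over a fixed time window. Once that holds, the Markov property gives geometric decay of the tail of $T$, and both claims follow.

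\textbf{Step 1: a uniform escape estimate.} Fix $x=(s,i,p)\in S_T$, so $i\geq1$. Let $h=1$ and define $\varepsilon=\min_{x\in S_T}\mathbb{P}_x(T\le h)$. We claim $\varepsilon>0$. Since $S_T$ is finite, it is enough to show $\mathbb{P}_x(T\le h)>0$ for each $x$.

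Consider the path consisting of $i$ successive recovery events with no other event in between. In a state $(s,j,p)$ with $j\geq1$, the recovery intensity is $\gamma_p j\geq\gamma_* j>0$ by \eqref{eq:positive-recovery}. The total intensity $q_{(s,j,p)}$ is finite. Hence the embedded jump chain takes a recovery step with probability $\gamma_p j/q_{(s,j,p)}>0$. After $i$ such steps the chain reaches $(s,0,p)\in S_a$.

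The total holding time along this path is a finite sum of independent exponentials with finite rates. It is therefore at most $h$ with positive probability. Multiplying these finitely many positive probabilities gives $\mathbb{P}_x(T\le h)>0$, and taking the minimum over the finite set $S_T$ gives $\varepsilon>0$.

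We take care only that we use the recovery mechanism rather than any specific phase switching. This is why the minimum $\gamma_*$ is positive over \emph{all} phases: phase changes may occur arbitrarily, but whatever the current phase, recovery remains available.

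\textbf{Step 2: geometric tail.} By the Markov property at times $h,2h,\dots$, and because the process stays in $S_T$ on $\{T>kh\}$,
\[
\mathbb{P}_x(T>kh)\le (1-\varepsilon)^k\qquad\text{for all }x\in S_T,\ k\ge0.
\]
Letting $k\to\infty$ shows $T<\infty$ almost surely.

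\textbf{Step 3: exponential moments.} From Step 2,
\[
\mathbb{E}_x e^{\eta T}\le \sum_{k\ge0} e^{\eta (k+1)h}\,\mathbb{P}_x(T>kh)\le e^{\eta h}\sum_{k\geq0}\bigl(e^{\eta h}(1-\varepsilon)\bigr)^k .
\]
The right-hand side is finite and independent of $x$ whenever $\eta<\eta_T:=-h^{-1}\log(1-\varepsilon)$. If $\varepsilon=1$, any $\eta_T$ works. Finiteness of integer moments then follows from $t^m\le m!\,\eta^{-m}e^{\eta t}$.

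\textbf{Main obstacle.} The only delicate point is the uniformity in Step 1. The phase process has state-dependent rates and is not Markov on its own, so one cannot argue phase by phase. This is resolved by the finiteness of $S_T$ and of all intensities, together with the fact that the escape path uses only recovery. An alternative route, which we would use as a check, is to note that $\mathbf{Q}_T$ is a nonsingular sub-generator because every state leads to $S_a$. Then $\eta_T$ can be taken as the negative of the largest real part among the eigenvalues of $\mathbf{Q}_T$.
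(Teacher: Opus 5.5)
Your proposal is correct and follows essentially the same route as the paper. Both arguments use three steps:
\begin{enumerate}
\item a uniform positive lower bound on $\mathbb{P}_x(T\le t_0)$ over the finite set $S_T$, obtained from a path of $i$ consecutive recoveries, which is available in every phase because $\gamma_*>0$;
\item geometric decay of $\mathbb{P}_x(T>k t_0)$ via the Markov property;
\item exponential moments, and then integer moments from $t^m\le m!\,\eta^{-m}e^{\eta t}$.
\end{enumerate}
The differences are cosmetic: the paper bounds each holding time by $t_0/N$, and it passes through a continuous tail bound and Tonelli rather than your interval sum. It also states non-explosiveness explicitly via uniformization, which you leave implicit but which is automatic on a finite state space with finite rates.
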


The proof is given in Appendix~\ref{app:proofs}. Choose and fix
\begin{equation*}
0<\eta_0<\min\{\eta_T,\gamma_*\}.
\end{equation*}
Let $\mathbf{I}_m$ denote the identity matrix of order $m$, $\mathbf{0}_{m\times n}$ the $m\times n$ zero matrix, $\mathbf{1}_m$ the column vector of $m$ ones and $\mathbf{e}_r(m)$ the $r$th canonical column vector of $\mathbb{R}^m$. If the subscript is an event, $\mathbf{1}_A$ denotes its indicator. Let $\otimes$ denote the Kronecker product and let $\mathsf{T}$ denote transposition.

The states in $S_a$ are ordered first by increasing $s$ and then by increasing $p$. The states in $S_T$ are ordered first by increasing $s$, then by increasing $i$ and finally by increasing $p$. Under this ordering,
\begin{equation*}
\mathbf{Q}=\begin{pmatrix}\mathbf{Q}_{a,a}&\mathbf{0}\\[1ex]\mathbf{Q}_{T,a}&\mathbf{Q}_{T,T}\end{pmatrix}.
\end{equation*}
The upper-right block is zero because $S_a$ is closed; $\mathbf{Q}_{a,a}$ is not needed for the first entrance time of $S_a$.

For $0\leq s\leq N-i_0$, put $n_s=N-s$, $M_a=N-i_0+1$ and $|S_a|=P M_a$. Define
\begin{equation*}
\mathbf{B}=\operatorname{diag}(b_1,\ldots,b_P),\qquad \boldsymbol{\Gamma}=\operatorname{diag}(\gamma_1,\ldots,\gamma_P),\qquad \boldsymbol{\Psi}=\operatorname{diag}(\psi_1,\ldots,\psi_P).
\end{equation*}
For $1\leq i,j\leq n_s$, let the $(i,j)$ block of $\mathbf{Q}_{s,s}\in\mathbb{R}^{P n_s\times P n_s}$ be
\begin{equation*}
(\mathbf{Q}_{s,s})_{ij}=
\begin{cases}
\mathbf{Q}_P(s,i)-\dfrac{s\,i}{N}\,\mathbf{B}-i\,\boldsymbol{\Gamma}-s\,\boldsymbol{\Psi},&j=i,\\[1ex]
i\,\boldsymbol{\Gamma},&j=i-1,\quad 2\leq i\leq n_s,\\[1ex]
\mathbf{0}_{P\times P},&\text{otherwise}.
\end{cases}
\end{equation*}
For $1\leq s\leq N-i_0$, define $\mathbf{Q}_{s,s-1}\in\mathbb{R}^{P n_s\times P(n_s+1)}$ by
\begin{equation*}
(\mathbf{Q}_{s,s-1})_{ij}=\begin{cases}s\,\boldsymbol{\Psi},&j=i,\\[1ex]\dfrac{s\,i}{N}\,\mathbf{B},&j=i+1,\\[1ex]\mathbf{0}_{P\times P},&\text{otherwise},\end{cases}
\end{equation*}
where $1\leq i\leq n_s$ and $1\leq j\leq n_s+1$. Finally, define $\mathbf{Q}_{s,a}\in\mathbb{R}^{P n_s\times P M_a}$ by
\begin{equation*}
\mathbf{Q}_{s,a}=\bigl[\mathbf{e}_1(n_s)\mathbf{e}_{s+1}^{\mathsf{T}}(M_a)\bigr]\otimes\boldsymbol{\Gamma},\qquad 0\leq s\leq N-i_0.
\end{equation*}
Thus, the process enters  $S_a$ after a recovery, only when $i=1$. Hence,
\begin{equation*}
\mathbf{Q}_{T,T}=\begin{pmatrix}\mathbf{Q}_{0,0}&\mathbf{0}&\cdots&\mathbf{0}\\\mathbf{Q}_{1,0}&\mathbf{Q}_{1,1}&\ddots&\vdots\\\mathbf{0}&\ddots&\ddots&\mathbf{0}\\\vdots&\ddots&\mathbf{Q}_{N-i_0,N-i_0-1}&\mathbf{Q}_{N-i_0,N-i_0}\end{pmatrix},\qquad
\mathbf{Q}_{T,a}=\begin{pmatrix}\mathbf{Q}_{0,a}\\\mathbf{Q}_{1,a}\\\vdots\\\mathbf{Q}_{N-i_0,a}\end{pmatrix},
\end{equation*}
with the evident reduction when $N-i_0=0$. The block-bidiagonal form follows directly from~\eqref{eq:generator}.  Recoveries preserve $s$, whereas infections and direct immunity acquisition decrease $s$ by one.

\subsection{Joint transform of extinction time and infection count}
\label{subsec:joint-transform}

Let $\mathcal{J}$ be the set of jump times of $\mathcal{X}$ and define
\begin{equation*}
N^I(t)=\sum_{\tau\in\mathcal{J}:\,0<\tau\leq t}\mathbf{1}_{\{S(\tau)=S(\tau^-)-1,\;I(\tau)=I(\tau^-)+1\}},\qquad N^I=N^I(T).
\end{equation*}
The variable $N^I$ counts infections after the starting time of the calculation. Since each infection decreases the susceptible population by one, conditional on $X(0)=(s,i,p)\in S_T$, $N^I\in\{0,1,\ldots,s\}$. The total number of individuals infected from the starting time, including those infectious at that time, is
\begin{equation}
\label{eq:total-infected-definition}
C^I=I(0)+N^I.
\end{equation}
Under the initial condition $R(0)=0$, this is the total outbreak size. Conditional on $(s,i,p)$, $C^I=i+N^I$.

For $\operatorname{Re}(z)\geq0$ and $|u|\leq1$, define
\begin{equation}
\label{eq:joint-transform}
\Phi_{s,i,p}(z,u)=\mathbb{E}\left[\exp\{-zT\}u^{N^I}\,\middle|\,S(0)=s,I(0)=i,J(0)=p\right],
\end{equation}
with the convention $u^0=1$. On this domain, $\Phi_{s,i,p}$ is the joint Laplace--Stieltjes transform--probability generating function of $(T,N^I)$. Since $N^I\leq s$ and Lemma~\ref{lem:absorption} gives exponential moments of $T$, it extends analytically in $z$ to $\operatorname{Re}(z)>-\eta_0$ and polynomially in $u$ to every $u\in\mathbb{C}$. We use the same notation for this extension. For $i=0$,
\begin{equation*}
\Phi_{s,0,p}(z,u)=1,
\qquad 0\leq s\leq N-i_0,
\quad 1\leq p\leq P.
\end{equation*}
For $0\leq s\leq N-i_0$ and $1\leq i\leq N-s$, write
\begin{equation*}
\boldsymbol{\Phi}_{s,i}(z,u)=(\Phi_{s,i,1}(z,u),\ldots,\Phi_{s,i,P}(z,u))^{\mathsf{T}}.
\end{equation*}
Define
\begin{equation*}
\mathbf{A}_{s,i}(z)=z\mathbf{I}_P+\dfrac{s\,i}{N}\,\mathbf{B}+i\,\boldsymbol{\Gamma}+s\,\boldsymbol{\Psi}-\mathbf{Q}_P(s,i).
\end{equation*}

For transient states $x=(s,i,p)$ and $y=(s',i',p')$, let $\kappa(x,y)=\mathbf{1}_{\{(s',i',p')=(s-1,i+1,p)\}}$. Let $\mathbf{Q}_{T,T}^{[I]}$ have entries $q_{x,y}\kappa(x,y)$, and put
\begin{equation*}
\mathbf{Q}_{T,T}(u)=\mathbf{Q}_{T,T}+(u-1)\mathbf{Q}_{T,T}^{[I]}.
\end{equation*}
Thus, infection transitions are multiplied by $u$ and all other entries are unchanged.

\begin{theorem}
\label{th:joint-transform}
For $\operatorname{Re}(z)>-\eta_0$ and $u\in\mathbb{C}$, the column vector $\boldsymbol{\Phi}(z,u)$ of the values in~\eqref{eq:joint-transform}, ordered as $S_T$, is the unique solution of
\begin{equation}
\label{eq:global-joint-system}
\left(z\mathbf{I}_{|S_T|}-\mathbf{Q}_{T,T}(u)\right)\boldsymbol{\Phi}(z,u)=\mathbf{Q}_{T,a}\mathbf{1}_{|S_a|}.
\end{equation}
Equivalently, $\boldsymbol{\Phi}_{s,0}(z,u)=\mathbf{1}_P$ and
\begin{equation}
\label{eq:joint-recursion-zero}
\mathbf{A}_{0,i}(z)\boldsymbol{\Phi}_{0,i}(z,u)=i\,\boldsymbol{\Gamma}\boldsymbol{\Phi}_{0,i-1}(z,u),\qquad 1\leq i\leq N,
\end{equation}
whereas, for $1\leq s\leq N-i_0$ and $1\leq i\leq N-s$,
\begin{align}
\mathbf{A}_{s,i}(z)\boldsymbol{\Phi}_{s,i}(z,u)
={}&i\,\boldsymbol{\Gamma}\boldsymbol{\Phi}_{s,i-1}(z,u)+s\,\boldsymbol{\Psi}\boldsymbol{\Phi}_{s-1,i}(z,u)
\notag\\
&+u\,\dfrac{s\,i}{N}\,\mathbf{B}\boldsymbol{\Phi}_{s-1,i+1}(z,u).
\label{eq:joint-recursion}
\end{align}
The vectors are computed by increasing $s$ and, within each level, by increasing $i$.
\end{theorem}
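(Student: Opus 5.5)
The plan is to derive the equations by first-step analysis, to use Lemma~\ref{lem:absorption} to justify them on the extended domain, and to obtain uniqueness from the block-triangular structure of $\mathbf{Q}_{T,T}$.

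\emph{Derivation of the equations.} Fix a transient state $x=(s,i,p)$ with $i\geq1$ and condition on the first holding time $\tau$ and the first jump. Under $\mathbb{P}_x$, $\tau$ is exponential with rate $q_x$, and the next state is $y$ with probability $q_{x,y}/q_x$, independently of $\tau$. By the strong Markov property at $\tau$, we have $T=\tau+T\circ\theta_\tau$ and $N^I=\kappa(x,X(\tau))+N^I\circ\theta_\tau$. Hence
\begin{equation*}
\Phi_{x}(z,u)=\frac{1}{z+q_x}\sum_{y\neq x}q_{x,y}\,u^{\kappa(x,y)}\,\Phi_y(z,u),
\end{equation*}
where $\Phi_y=1$ for $y\in S_a$. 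This step uses $\mathbb{E}[e^{-z\tau}]=q_x/(q_x+z)$, which requires $\operatorname{Re}(z)>-q_x$. The bound holds because $q_x\geq\gamma_p i\geq\gamma_*>\eta_0$. Multiplying by $z+q_x$ gives exactly the row $x$ of \eqref{eq:global-joint-system}: the diagonal term is $z+q_x$, the infection entries carry the factor $u$, and the only transitions into $S_a$ are recoveries from $i=1$, which produce $\mathbf{Q}_{T,a}\mathbf{1}_{|S_a|}$.

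\emph{Justifying the extended domain.} To use the strong Markov factorization when $\operatorname{Re}(z)<0$, I will check integrability: $|e^{-zT}u^{N^I}|\leq e^{\eta_0 T}\max(1,|u|)^{s}$, and this is integrable by Lemma~\ref{lem:absorption}. Fubini and the strong Markov property then apply. The same bound shows that both sides are analytic in $z$. Alternatively, one can prove the identity for $\operatorname{Re}(z)\geq0$ and $|u|\leq1$ and extend it by analytic continuation.

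\emph{Level-wise form.} Grouping the rows by $(s,i)$ and collecting the $P$ phase components gives
\begin{equation*}
\mathbf{A}_{s,i}(z)\boldsymbol{\Phi}_{s,i}=i\boldsymbol{\Gamma}\boldsymbol{\Phi}_{s,i-1}+s\boldsymbol{\Psi}\boldsymbol{\Phi}_{s-1,i}+u\tfrac{si}{N}\mathbf{B}\boldsymbol{\Phi}_{s-1,i+1}.
\end{equation*}
Here $\boldsymbol{\Phi}_{s,0}=\mathbf{1}_P$ accounts for the absorbing recoveries. For $s=0$ the last two terms vanish, which gives \eqref{eq:joint-recursion-zero}. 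The destination $(s-1,i+1)$ is admissible because $i+1\leq N-(s-1)$.

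\emph{Uniqueness and the ordering.} This is the main technical step: showing that each $\mathbf{A}_{s,i}(z)$ is invertible when $\operatorname{Re}(z)>-\eta_0$. I will argue by strict row diagonal dominance. The diagonal entry of row $p$ is
\begin{equation*}
z+\tfrac{si}{N}b_p+i\gamma_p+s\psi_p+\textstyle\sum_{p'\neq p}\lambda_{p,p'}(s,i).
\end{equation*}
Its real part exceeds $\sum_{p'\neq p}\lambda_{p,p'}(s,i)$, which is the sum of the moduli of the off-diagonal entries, by at least $i\gamma_*+\operatorname{Re}(z)>\gamma_*-\eta_0>0$. By Lévy--Desplanques, $\mathbf{A}_{s,i}(z)$ is therefore nonsingular.

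Since $\mathbf{Q}_{T,T}$ is block lower-triangular in $s$, and each diagonal block $\mathbf{Q}_{s,s}$ is in turn block lower-triangular in $i$ with diagonal blocks $-\mathbf{A}_{s,i}(0)+\ldots$, the matrix $z\mathbf{I}-\mathbf{Q}_{T,T}(u)$ is block lower-triangular. Its diagonal blocks are the matrices $\mathbf{A}_{s,i}(z)$ (the parameter $u$ enters only off the diagonal), so the matrix is invertible and the solution is unique. The right-hand side of the level-wise equation involves only levels $s-1$, or the same $s$ with smaller $i$. Consequently, computing by increasing $s$ and, within each level, by increasing $i$ determines every vector $\boldsymbol{\Phi}_{s,i}$ successively. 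Since $\boldsymbol{\Phi}(z,u)$ satisfies the system, it is this unique solution.
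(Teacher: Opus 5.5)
Your proposal is correct and follows essentially the same route as the paper: first-step analysis at the first jump, with $q_x\geq\gamma_*>\eta_0$ guaranteeing convergence on the extended domain; collection of the phase components into $\mathbf{A}_{s,i}(z)$; nonsingularity via strict row diagonal dominance, which the paper proves by hand with a maximal-component argument rather than citing L\'evy--Desplanques; and uniqueness from the level-wise triangular structure. Your explicit observation that $z\mathbf{I}-\mathbf{Q}_{T,T}(u)$ is block lower-triangular with diagonal blocks $\mathbf{A}_{s,i}(z)$ makes the uniqueness of the global system slightly more transparent than the paper's closing remark, but it is the same argument.
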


The proof is given in Appendix~\ref{app:proofs}. The matrix $\mathbf{A}_{s,i}(z)$ is nonsingular on the stated domain. If $\lambda_{p,p'}(s,i)=0$ for all $p'\neq p$ and all $(s,i)$, then the phase equations separate. If $\lambda_{p,p'}(s,i)=q(p,p')$, the exogenous switching model is recovered. In the general case, the $P$ phase values for a fixed pair $(s,i)$ are solved together.

\subsection{Consequences of the joint transform}
\label{subsec:consequences}

The extinction-time transform and the infection-count probability generating function are
\begin{equation*}
y_{s,i,p}(z)=\Phi_{s,i,p}(z,1),\qquad g_{s,i,p}(u)=\Phi_{s,i,p}(0,u).
\end{equation*}
The probability generating function of $C^I$ conditional on $(s,i,p)$ is $u^i g_{s,i,p}(u)$. If $\boldsymbol{y}_{s,i}(z)=(y_{s,i,1}(z),\ldots,y_{s,i,P}(z))^{\mathsf{T}}$, then Theorem~\ref{th:joint-transform} gives
\begin{equation*}
\mathbf{A}_{0,i}(z)\boldsymbol{y}_{0,i}(z)=i\,\boldsymbol{\Gamma}\boldsymbol{y}_{0,i-1}(z),\qquad 1\leq i\leq N,
\end{equation*}
and
\begin{align*}
\mathbf{A}_{s,i}(z)\boldsymbol{y}_{s,i}(z)
={}&i\,\boldsymbol{\Gamma}\boldsymbol{y}_{s,i-1}(z)+s\,\boldsymbol{\Psi}\boldsymbol{y}_{s-1,i}(z)
\notag\\
&+\dfrac{s\,i}{N}\,\mathbf{B}\boldsymbol{y}_{s-1,i+1}(z),
\end{align*}
for $1\leq s\leq N-i_0$ and $1\leq i\leq N-s$, with $\boldsymbol{y}_{s,0}(z)=\mathbf{1}_P$.

For $0\leq n\leq s$ and $i\geq1$, define
\begin{equation}
\label{eq:coefficient-transform}
h_{s,i,p}^{(n)}(z)=\mathbb{E}\left[\exp\{-zT\}\mathbf{1}_{\{N^I=n\}}\,\middle|\,(s,i,p)\right],
\end{equation}
and set $h_{s,i,p}^{(n)}(z)=0$ outside this range. With $\mathbf{h}_{s,i}^{(n)}(z)=(h_{s,i,1}^{(n)}(z),\ldots,h_{s,i,P}^{(n)}(z))^{\mathsf{T}}$,
\begin{equation}
\label{eq:joint-polynomial-expansion}
\boldsymbol{\Phi}_{s,i}(z,u)=\sum_{n=0}^{s}u^n\mathbf{h}_{s,i}^{(n)}(z).
\end{equation}

\begin{corollary}
\label{cor:coefficient-recursion}
The boundary values are $\mathbf{h}_{s,0}^{(0)}(z)=\mathbf{1}_P$ and $\mathbf{h}_{s,0}^{(n)}(z)=\mathbf{0}_{P\times1}$ for $n\neq0$. For $1\leq i\leq N$,
\begin{equation}
\label{eq:coefficient-zero}
\mathbf{A}_{0,i}(z)\mathbf{h}_{0,i}^{(0)}(z)=i\,\boldsymbol{\Gamma}\mathbf{h}_{0,i-1}^{(0)}(z),\qquad \mathbf{h}_{0,i}^{(n)}(z)=\mathbf{0}_{P\times1}\;(n\neq0).
\end{equation}
For $1\leq s\leq N-i_0$, $1\leq i\leq N-s$ and $0\leq n\leq s$,
\begin{align}
\mathbf{A}_{s,i}(z)\mathbf{h}_{s,i}^{(n)}(z)
={}&i\,\boldsymbol{\Gamma}\mathbf{h}_{s,i-1}^{(n)}(z)+s\,\boldsymbol{\Psi}\mathbf{h}_{s-1,i}^{(n)}(z)
\notag\\
&+\dfrac{s\,i}{N}\,\mathbf{B}\mathbf{h}_{s-1,i+1}^{(n-1)}(z).
\label{eq:coefficient-recursion}
\end{align}
\end{corollary}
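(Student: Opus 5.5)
The plan is to obtain the corollary by extracting coefficients of $u$ in Theorem~\ref{th:joint-transform}, using the polynomial expansion~\eqref{eq:joint-polynomial-expansion}. Fix $z$ with $\operatorname{Re}(z)>-\eta_0$. Since $N^I\le s$ under $(s,i,p)$, each $\Phi_{s,i,p}(z,\cdot)$ is a polynomial in $u$ of degree at most $s$. Its coefficient of $u^n$ is exactly $h^{(n)}_{s,i,p}(z)$ in~\eqref{eq:coefficient-transform}, by linearity of expectation applied to $\exp\{-zT\}u^{N^I}=\sum_n u^n\exp\{-zT\}\mathbf{1}_{\{N^I=n\}}$. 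Integrability of each term follows from Lemma~\ref{lem:absorption}, because $|\exp\{-zT\}|\le \exp\{\eta_0 T\}$. With the convention $h^{(n)}=0$ outside $0\le n\le s$, the expansion can be written as a sum over all $n\in\mathbb{Z}$.

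The boundary values come first. For $i=0$ we have $T=0$ and $N^I=0$ almost surely, so $h^{(0)}_{s,0,p}=1$ and all other coefficients vanish; this matches $\boldsymbol{\Phi}_{s,0}=\mathbf{1}_P$. For $s=0$ no infection is possible, so $N^I=0$ almost surely. Hence $\mathbf{h}^{(n)}_{0,i}=\mathbf{0}$ for $n\ne0$ and $\mathbf{h}^{(0)}_{0,i}=\boldsymbol{\Phi}_{0,i}(z,u)$, which is independent of $u$. Equation~\eqref{eq:coefficient-zero} then follows directly from~\eqref{eq:joint-recursion-zero}.

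For $s\ge1$, I will substitute the expansion into~\eqref{eq:joint-recursion}. The left side becomes $\sum_n u^n\mathbf{A}_{s,i}(z)\mathbf{h}^{(n)}_{s,i}(z)$, because $\mathbf{A}_{s,i}(z)$ does not depend on $u$. On the right, the $\boldsymbol{\Gamma}$ and $\boldsymbol{\Psi}$ terms contribute $u^n$ times $\mathbf{h}^{(n)}_{s,i-1}$ and $\mathbf{h}^{(n)}_{s-1,i}$. The infection term $u\,\frac{si}{N}\mathbf{B}\sum_m u^m\mathbf{h}^{(m)}_{s-1,i+1}$ shifts the index and contributes $\frac{si}{N}\mathbf{B}\mathbf{h}^{(n-1)}_{s-1,i+1}$ to the coefficient of $u^n$. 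Both sides are vector polynomials in $u$ that agree for every $u\in\mathbb{C}$, so their coefficients are equal, which gives~\eqref{eq:coefficient-recursion} for each $n$. The case $n=0$ uses $\mathbf{h}^{(-1)}=\mathbf{0}$. Degrees are consistent because $\mathbf{h}^{(n)}_{s-1,\cdot}$ vanishes for $n>s-1$, so the top coefficient $n=s$ is fed only by the shifted term.

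The only point requiring care is the convention at the edges of the index ranges. When $i=N-s$, the term $\mathbf{h}_{s-1,i}$ refers to $(s-1,N-s)$, which is admissible since $N-s\le N-s+1$. The same check should be made against the range used in Theorem~\ref{th:joint-transform}: the recursion~\eqref{eq:joint-recursion} already involves exactly these states, so no new range issue arises. Beyond these index checks, I expect no real obstacle.
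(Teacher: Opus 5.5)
Your proposal is correct and follows the paper's own proof essentially step for step. The boundary values come from $T=0$, $N^I=0$ when $i=0$; the $s=0$ case holds because no infection can occur, so only the $u^0$ coefficient of \eqref{eq:joint-recursion-zero} survives; and for $s\geq1$ you substitute \eqref{eq:joint-polynomial-expansion} into \eqref{eq:joint-recursion}, shift the index $n\mapsto n-1$ in the infection term, and match polynomial coefficients. The paper adds only a closing remark that nonsingularity of $\mathbf{A}_{s,i}(z)$ and induction in $s$ and $i$ give uniqueness, which goes beyond the identities the corollary actually states.
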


The proof is given in Appendix~\ref{app:proofs}. For $i\geq1$, setting $z=0$ in~\eqref{eq:coefficient-transform} gives the probability mass function of $N^I$:
\begin{equation*}
x_{s,i,p}(n)=\mathbb{P}(N^I=n\,\mid\,(s,i,p))=h_{s,i,p}^{(n)}(0),\qquad 0\leq n\leq s.
\end{equation*}
For $i=0$, set $x_{s,0,p}(0)=1$ and $x_{s,0,p}(n)=0$ for $n\neq0$.
Also,
\begin{equation}
\label{eq:total-infected-pmf}
\mathbb{P}(C^I=c\,\mid\,(s,i,p))=h_{s,i,p}^{(c-i)}(0),\qquad i\leq c\leq i+s.
\end{equation}

If $h_{s,i,p}^{(n)}(0)>0$, then the conditional Laplace--Stieltjes transform of $T$, given $N^I=n$, is
\begin{equation}
\label{eq:conditional-lst}
\mathbb{E}\left[\exp\{-zT\}\,\middle|\,N^I=n,(s,i,p)\right]=\dfrac{h_{s,i,p}^{(n)}(z)}{h_{s,i,p}^{(n)}(0)},\qquad \operatorname{Re}(z)>-\eta_0.
\end{equation}
For each integer $k\geq1$,
\begin{equation}
\label{eq:conditional-moments}
\mathbb{E}\left[T^k\,\middle|\,N^I=n,(s,i,p)\right]=\dfrac{(-1)^k\left.\dfrac{\partial^k}{\partial z^k}h_{s,i,p}^{(n)}(z)\right|_{z=0}}{h_{s,i,p}^{(n)}(0)}.
\end{equation}
The proof of these two identities is included in Appendix~\ref{app:proofs}.

For nonnegative integers $m$ and $r$, define $(m)_0=1$ and $(m)_r=m!/(m-r)!$ for $1\leq r\leq m$, while $(m)_r=0$ for $r>m$. For nonnegative integers $k$ and $r$, set
\begin{equation*}
\mu_{s,i,p}^{(k,r)}=\mathbb{E}\left[T^k(N^I)_r\,\middle|\,(s,i,p)\right], \quad k\geq0, r\geq0,
\end{equation*}
and let $\boldsymbol{\mu}_{s,i}^{(k,r)}=(\mu_{s,i,1}^{(k,r)},\ldots,\mu_{s,i,P}^{(k,r)})^{\mathsf{T}}$. These moments are finite because $N^I\leq s$ and all positive integer moments of $T$ are finite.

We adopt the convention that
$\boldsymbol{\mu}_{s,i}^{(k,r)}=\mathbf{0}_{P\times1}$ whenever $k<0$ or $r<0$.
Thus, the terms with indices $(k-1,r)$ and $(k,r-1)$ in the recursions below vanish automatically when $k=0$ and $r=0$, respectively.

\begin{theorem}
\label{th:mixed-moments}
The boundary values are $\boldsymbol{\mu}_{s,0}^{(0,0)}=\mathbf{1}_P$ and $\boldsymbol{\mu}_{s,0}^{(k,r)}=\mathbf{0}_{P\times1}$ for $k+r\geq1$. Also, $\boldsymbol{\mu}_{s,i}^{(0,0)}=\mathbf{1}_P$ for every transient state. For $k+r\geq1$,
\begin{equation}
\label{eq:mixed-recursion-zero}
\mathbf{A}_{0,i}(0)\boldsymbol{\mu}_{0,i}^{(k,r)}=k\boldsymbol{\mu}_{0,i}^{(k-1,r)}+i\,\boldsymbol{\Gamma}\boldsymbol{\mu}_{0,i-1}^{(k,r)},\qquad 1\leq i\leq N,
\end{equation}
and, for $1\leq s\leq N-i_0$ and $1\leq i\leq N-s$,
\begin{equation}
\label{eq:mixed-recursion}
\mathbf{A}_{s,i}(0)\boldsymbol{\mu}_{s,i}^{(k,r)}=k\boldsymbol{\mu}_{s,i}^{(k-1,r)}+i\,\boldsymbol{\Gamma}\boldsymbol{\mu}_{s,i-1}^{(k,r)}+s\,\boldsymbol{\Psi}\boldsymbol{\mu}_{s-1,i}^{(k,r)}+\dfrac{s\,i}{N}\,\mathbf{B}\left(\boldsymbol{\mu}_{s-1,i+1}^{(k,r)}+r\boldsymbol{\mu}_{s-1,i+1}^{(k,r-1)}\right).
\end{equation}
\end{theorem}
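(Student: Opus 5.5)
The plan is to derive the mixed-moment recursions by differentiating the joint-transform recursion of Theorem~\ref{th:joint-transform}, using the generating identity
\begin{equation*}
\mu_{s,i,p}^{(k,r)}=(-1)^k\left.\dfrac{\partial^{k+r}}{\partial z^k\,\partial u^r}\Phi_{s,i,p}(z,u)\right|_{z=0,\,u=1}.
\end{equation*}
This identity holds for the following reasons. By Lemma~\ref{lem:absorption} and the choice $\eta_0<\eta_T$, $\Phi_{s,i,p}$ is analytic in $z$ on $\operatorname{Re}(z)>-\eta_0$ and polynomial in $u$. Differentiation under the expectation is therefore justified: $|T^k e^{-zT}|\le C e^{\eta T}$ for $z$ near $0$ with $\eta<\eta_T$, and $N^I\le s$. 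Since $\partial_u^r u^{N^I}|_{u=1}=(N^I)_r$ and $\partial_z^k e^{-zT}|_{z=0}=(-1)^kT^k$, the identity follows. The boundary values are then immediate: $\Phi_{s,0,p}\equiv 1$ gives $\boldsymbol{\mu}_{s,0}^{(0,0)}=\mathbf{1}_P$ and zero for $k+r\ge1$, and $\boldsymbol{\mu}_{s,i}^{(0,0)}=\Phi_{s,i}(0,1)=\mathbf{1}_P$ because $T<\infty$ almost surely.

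For the recursion, apply $(-1)^k\partial_z^k\partial_u^r$ at $(0,1)$ to both sides of \eqref{eq:joint-recursion}, and to \eqref{eq:joint-recursion-zero} for $s=0$. On the left, $\mathbf{A}_{s,i}(z)$ is affine in $z$ with $\partial_z\mathbf{A}_{s,i}=\mathbf{I}_P$ and is independent of $u$. Leibniz's rule therefore gives
\begin{equation*}
\mathbf{A}_{s,i}(0)\,\partial_z^k\partial_u^r\boldsymbol{\Phi}_{s,i}+k\,\partial_z^{k-1}\partial_u^r\boldsymbol{\Phi}_{s,i}.
\end{equation*}
Multiplying by $(-1)^k$ yields $\mathbf{A}_{s,i}(0)\boldsymbol{\mu}_{s,i}^{(k,r)}-k\boldsymbol{\mu}_{s,i}^{(k-1,r)}$, since $(-1)^k=-(-1)^{k-1}$. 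Moving the second term to the right produces the term $k\boldsymbol{\mu}_{s,i}^{(k-1,r)}$; when $k=0$ the convention makes it vanish.

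On the right, the recovery and immunity terms have constant coefficients and give $i\boldsymbol{\Gamma}\boldsymbol{\mu}_{s,i-1}^{(k,r)}$ and $s\boldsymbol{\Psi}\boldsymbol{\mu}_{s-1,i}^{(k,r)}$. The infection term carries the factor $u$. Leibniz's rule in $u$ gives
\begin{equation*}
\partial_u^r\bigl(u\boldsymbol{\Phi}\bigr)=u\,\partial_u^r\boldsymbol{\Phi}+r\,\partial_u^{r-1}\boldsymbol{\Phi},
\end{equation*}
and at $u=1$ this produces $\boldsymbol{\mu}_{s-1,i+1}^{(k,r)}+r\boldsymbol{\mu}_{s-1,i+1}^{(k,r-1)}$, which is exactly \eqref{eq:mixed-recursion}. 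For $s=0$ there is no infection or immunity term, which gives \eqref{eq:mixed-recursion-zero}. At $i=1$ the term $\boldsymbol{\mu}_{s,0}^{(k,r)}$ equals zero for $k+r\ge1$, consistent with the boundary values.

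The main obstacle is showing that the recursion actually determines the moments, that is, that $\mathbf{A}_{s,i}(0)$ is nonsingular. Theorem~\ref{th:joint-transform} states this on the domain $\operatorname{Re}(z)>-\eta_0$, which contains $z=0$. Directly, $\mathbf{A}_{s,i}(0)$ is a Z-matrix whose row sums are $\frac{si}{N}b_p+i\gamma_p+s\psi_p\ge\gamma_*>0$ because $i\ge1$. It is therefore strictly diagonally dominant and hence a nonsingular M-matrix.

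The unknowns are computed by induction: increasing $s$, then increasing $i$, with $(k,r)$ ordered lexicographically. Every right-hand term involves either a lower level $s-1$, a smaller $i$, or smaller $(k,r)$ at the same $(s,i)$, so each step is well-defined.
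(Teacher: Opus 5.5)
Your proposal is correct and follows essentially the same route as the paper. You apply $(-1)^k\partial_z^k\partial_u^r$ at $(0,1)$ to the recursions of Theorem~\ref{th:joint-transform}, use Leibniz's rule on the $z\mathbf{I}_P$ part of $\mathbf{A}_{s,i}(z)$ and on the factor $u$ in the infection term, and conclude with nonsingularity of $\mathbf{A}_{s,i}(0)$ and an inductive ordering. The only differences are cosmetic: you check nonsingularity directly via strict diagonal dominance instead of citing Theorem~\ref{th:joint-transform}, and you order the induction by $(s,i)$ before $(k,r)$ rather than by total order $k+r$ first, which is equally valid.
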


The proof is given in Appendix~\ref{app:proofs}. The extinction-time moments are obtained by setting $r=0$ in the formulas of Theorem~\ref{th:mixed-moments}. In particular, if $\boldsymbol{m}_{s,i}^{(k)}=\boldsymbol{\mu}_{s,i}^{(k,0)}$, then
\begin{align}
\mathbf{A}_{0,i}(0)\boldsymbol{m}_{0,i}^{(k)}&=k\boldsymbol{m}_{0,i}^{(k-1)}+i\,\boldsymbol{\Gamma}\boldsymbol{m}_{0,i-1}^{(k)},\qquad 1\leq i\leq N,
\label{eq:extinction-moment-zero}\\
\mathbf{A}_{s,i}(0)\boldsymbol{m}_{s,i}^{(k)}&=k\boldsymbol{m}_{s,i}^{(k-1)}+i\,\boldsymbol{\Gamma}\boldsymbol{m}_{s,i-1}^{(k)}+s\,\boldsymbol{\Psi}\boldsymbol{m}_{s-1,i}^{(k)}+\dfrac{s\,i}{N}\,\mathbf{B}\boldsymbol{m}_{s-1,i+1}^{(k)}.
\label{eq:extinction-moment-recursion}
\end{align}
The factorial moments of $N^I$ are obtained by setting $k=0$. In the following identities, $\mathbb{V}$ denotes variance:
\begin{align*}
\mathbb{E}[T\,\mid\,(s,i,p)]&=\mu_{s,i,p}^{(1,0)},
&\mathbb{V}(T\,\mid\,(s,i,p))&=\mu_{s,i,p}^{(2,0)}-(\mu_{s,i,p}^{(1,0)})^2,\\
\mathbb{E}[N^I\,\mid\,(s,i,p)]&=\mu_{s,i,p}^{(0,1)},
&\mathbb{V}(N^I\,\mid\,(s,i,p))&=\mu_{s,i,p}^{(0,2)}+\mu_{s,i,p}^{(0,1)}-(\mu_{s,i,p}^{(0,1)})^2,\\
\operatorname{Cov}(T,N^I\,\mid\,(s,i,p))&=\mu_{s,i,p}^{(1,1)}-\mu_{s,i,p}^{(1,0)}\mu_{s,i,p}^{(0,1)}.
\end{align*}
Since $C^I=i+N^I$ under conditioning on $(s,i,p)$,
\begin{equation*}
\mathbb{V}(C^I\,\mid\,(s,i,p))=\mathbb{V}(N^I\,\mid\,(s,i,p)),\qquad
\operatorname{Cov}(T,C^I\,\mid\,(s,i,p))=\operatorname{Cov}(T,N^I\,\mid\,(s,i,p)).
\end{equation*}

\subsection{Computational aspects}
\label{subsec:computational}

For a fixed pair $(z,u)$, the recursion in Theorem~\ref{th:joint-transform} requires one $P\times P$ linear system for each admissible pair $(s,i)$ with $i\geq1$. Since
\begin{equation*}
|S_T|=P\sum_{s=0}^{N-i_0}(N-s)=P\dfrac{(N-i_0+1)(N+i_0)}{2},
\end{equation*}
the dense cost is $O(P^3N^2)$ for fixed $i_0$. This order is unchanged by the state dependence of $\lambda_{p,p'}(s,i)$, because only the entries of the $P\times P$ matrices change from one epidemic state to another. A dense solution of the full system~\eqref{eq:global-joint-system} would have order $O(|S_T|^3)$, although sparse solvers can reduce this cost. For the complete set of coefficient transforms $h_{s,i,p}^{(n)}(z)$, the matrices $\mathbf{A}_{s,i}(z)$ are factorized once for each $(s,i)$ and reused over $n$. Thus, the cost is of order $O(P^3N^2+P^2N^3)$ for fixed $i_0$. The identities
\begin{equation*}
\Phi_{s,i,p}(0,1)=1,
\qquad
\sum_{n=0}^{s}h_{s,i,p}^{(n)}(0)=1,
\qquad
\sum_{c=i}^{i+s}\mathbb{P}(C^I=c\,\mid\,(s,i,p))=1
\end{equation*}
provide basic checks on the implementation.

\section{Application to mpox incidence data}
\label{sec3}

This section provides a model-based illustration using weekly incidence data from the 2022--2023 mpox outbreak in Luxembourg. The empirical analysis is intentionally parsimonious: a baseline one-phase SIR model is first calibrated to the incidence series, and alternative regime-switching mechanisms are then evaluated conditionally on the fitted baseline parameters. Thus, only the baseline epidemiological parameters are informed by the observed incidence data; the switching intensities and intervention-specific effects are specified rather than estimated from the historical intervention process. The scenarios should consequently be interpreted as conditional comparisons of model mechanisms, not as estimates of the causal or historical effects of interventions implemented in Luxembourg.

The recursive probability mass functions, transforms, and moments are evaluated exactly up to numerical solution of the corresponding linear systems. The continuous extinction-time densities are subsequently obtained by numerical inversion of the Laplace--Stieltjes transform using the Abate--Whitt method~\citep{Abate1995}. For each initial state and each intervention specification,
we compute the probability mass function of the total number of infected
individuals and the distributional characteristics of the extinction time. The total number of infected individuals is defined in Eq.~\eqref{eq:total-infected-definition}, where $N^I$ denotes the number of infections generated after the initial state.
Thus, $C^I$ includes the individuals infectious at time zero. The probability
mass function of $C^I$ is obtained from~\eqref{eq:total-infected-pmf}, whereas
the first two moments of $T$ are obtained from
\eqref{eq:extinction-moment-zero}--\eqref{eq:extinction-moment-recursion}. The density curves of the extinction time are obtained by applying the Abate--Whitt numerical inversion method to the Laplace--Stieltjes transform \(\Phi_{s_0,i_0,p_0}(z,1)\) for the selected initial state \((s_0,i_0,p_0)\in S_T\). 
In the implementation, the density was evaluated on a finite time grid for each scenario group. 
The upper endpoint of the grid was set to
\[
\max_j \left\{ \mathbb{E}_j[T] + 4\,\operatorname{SD}_j(T) \right\},
\]
where the maximum is taken over the scenarios plotted in the same figure, and the grid contained 120 points. 
For each grid value \(t>0\), the Abate--Whitt Euler inversion was applied to the transform
\[
z \mapsto \Phi_{s_0,i_0,p_0}(z,1)
\]
using the standard choice of 
\[
z_k=\frac{A+2\pi \mathrm{i}k}{2t},
\qquad A=-log(10^{-8})\approx 18.4,
\]
with 15 initial terms and 11 Euler averaging terms \citep{Abate1995}. 
The approximation was computed from the real parts of the transform evaluations, with alternating signs and binomial Euler weights. 
Numerical values that were non-finite or negative only because of inversion error were set to zero before plotting.

We examine four groups of intervention scenarios. The first compares one-phase control settings with different levels of transmission reduction. The second compares earlier and delayed transitions from no measures to strict control. The third evaluates the additional effect of vaccination-supported immunity when strict control is already in place. These three groups correspond to the state-independent
special case
\begin{equation*}
\lambda_{r,\ell}(s,i)=q(r,\ell),
\qquad r\neq \ell,
\end{equation*}
for all admissible $(s,i)$. The fourth group uses an explicitly state-dependent transition intensity, in order to show how the general formulation can represent policy escalation that reacts to the current number of infectious individuals.

\subsection{Data and baseline calibration}
\label{sec3.1}

\paragraph{Data and temporal aggregation.}
The empirical series consists of publicly available daily mpox case counts for Luxembourg obtained from the Our World in Data mpox repository~\citep{EdouardMathieu2022}; the data can be accessed and downloaded from \url{https://ourworldindata.org/mpox}. The analysis used 
reported daily mpox case counts over the period from 16 June 2022 to 13 September 2022. The one-phase model was fitted to the complete period. Accordingly, the fitted transmission coefficient is interpreted below as an effective baseline transmission coefficient rather than as an estimate of transmission in a strictly uncontrolled phase.

The daily counts were aggregated into calendar weeks using Monday as the week start. Since the observation window starts on 16/06/2022 and ends on 13/09/2022, the first and last aggregated weeks are partial. These weeks were retained in the calibration, giving \(K=14\) weekly counts, denoted by \(y_1,\ldots,y_K\). Weekly aggregation reduces sparsity and day-of-reporting fluctuations in the daily series and yields observations consistent with the weekly time unit adopted for model calibration.

\paragraph{Deterministic calibration model and observation models.}
A one-phase deterministic SIR model without regime switching was fitted to the weekly incidence series. Time is measured in weeks, \(R(0)=0\), and the population size in the fitted effective system is \(N=S(0)+I(0)+R(0)\). As a modeling approximation, the mean infectious period was fixed at three weeks, consistently with the reported two-to-four-week duration of the symptomatic phase of mpox~\citep{WHOmpox}. Accordingly, \(\gamma=1/3\) per week, where \(\gamma\) is the one-phase counterpart of the phase-specific recovery intensity \(\gamma_p\) defined in Section~\ref{subsec:model}. This choice should be interpreted as a pragmatic calibration assumption rather than as a direct estimate of the infectious period from the Luxembourg data. For
\begin{equation*}
\theta=\bigl(b,S(0),I(0)\bigr),
\end{equation*}
the calibration system is
\begin{equation*}
\frac{\mathrm dS(t)}{\mathrm dt}=-\frac{b}{N}S(t)I(t),
\qquad
\frac{\mathrm dI(t)}{\mathrm dt}=\frac{b}{N}S(t)I(t)-\gamma I(t),
\qquad
\frac{\mathrm dR(t)}{\mathrm dt}=\gamma I(t),
\end{equation*}
with the cumulative number of infections generated after time zero defined by
\begin{equation*}
\frac{\mathrm dC(t)}{\mathrm dt}=\frac{b}{N}S(t)I(t),
\qquad C(0)=0.
\end{equation*}
The model-implied incidence during week \(t\) is therefore
\begin{equation*}
\mu_t(\theta)=C(t)-C(t-1),
\qquad 1\leq t\leq K.
\end{equation*}

Under the Poisson observation model, the weekly counts are conditionally independent and
\begin{equation*}
y_t\mid\theta\sim\operatorname{Poisson}\!\left(\mu_t(\theta)\right),
\qquad t=1,\ldots,K.
\end{equation*}
The corresponding log-likelihood is
\begin{equation*}
\ell_{\mathrm P}(\theta)
=\sum_{t=1}^{K}\left[
 y_t\log\{\mu_t(\theta)\}-\mu_t(\theta)-\log(y_t!)
\right].
\end{equation*}

To assess possible extra-Poisson variability, a negative-binomial
observation model with the same conditional mean $\mu_t(\theta)$
and size (inverse-dispersion) parameter $\kappa$ was also fitted.
The implemented mean--size parameterization was
\begin{equation*}
\operatorname{Var}(y_t\mid\theta,\kappa)
=
\mu_t(\theta)+\frac{\mu_t(\theta)^2}{\kappa}.
\end{equation*}

Under this convention, the Poisson model is recovered as \(\kappa\to\infty\). The associated negative-binomial log-likelihood was maximized over \((\theta,\kappa)\). 
The admissible upper bound for the size parameter was
$\kappa_{\max}=20000$.

For both observation models, the negative log-likelihood was minimized with the L-BFGS-B algorithm~\citep{Byrd1995}. The box constraints were
\begin{equation*}
b\in[0.05,3.00],
\qquad
S(0)\in\left[
\max\left\{20,\sum_{t=1}^K y_t\right\},
\max\left\{100,5\sum_{t=1}^K y_t\right\}
\right],
\qquad
I(0)\in[10^{-4},10],
\end{equation*}
and, for the negative-binomial fit,
\begin{equation*}
\kappa\in[0.05,20000].
\end{equation*}
For the data used here, \(\sum_{t=1}^K y_t=55\), so that the bounds for \(S(0)\) were \([55,275]\).

The first Poisson starting value was obtained from a simple initialization rule based on the early growth of the weekly series, giving
\begin{equation*}
\theta_{\mathrm P}^{(0)}=(0.552,66,1).
\end{equation*}
For the negative-binomial fit, the first starting value was
\begin{equation*}
\theta_{\mathrm{NB}}^{(0)}=(0.8,66,1),
\qquad
\kappa^{(0)}=20.
\end{equation*}
Additional starting values were generated randomly within the admissible parameter bounds, with \(\kappa\) sampled on a logarithmic scale.

The ODE system was integrated using a fixed-step fourth-order Runge--Kutta scheme with 100 substeps per week, corresponding to a step size of 0.01 weeks. The likelihood maximization was performed on the original bounded parameter scale. Approximate Wald confidence intervals were then obtained from the inverse observed information matrix on an unconstrained logit-transformed bounded scale and mapped back to the original scale.

\paragraph{Choice of the Poisson observation model.}
In the negative-binomial fit,
the estimated size parameter reached its upper admissible bound
$\widehat\kappa=20000=\kappa_{\max}$,
which corresponds under the implemented parameterization to negligible extra-Poisson variation. As an additional diagnostic, the Pearson dispersion statistic for the Poisson fit was
\begin{equation*}
\widehat D_P
=\frac{1}{K-d}\sum_{t=1}^{K}
\frac{\{y_t-\mu_t(\widehat\theta)\}^2}{\mu_t(\widehat\theta)},
\end{equation*}
where \(d=3\) is the number of fitted parameters. The resulting value, \(\widehat D_P=0.79\), provides no evidence of residual overdispersion. The Poisson model was therefore retained as the more parsimonious observation model for the baseline calibration.

\paragraph{Estimated parameters and integer initial state.}
Following \citet{Bettencourt2009} and \citet{Kypraios2009}, the initial numbers of susceptible and infectious individuals were treated as effective unknown quantities and estimated jointly with the transmission coefficient. This is appropriate because the Luxembourg outbreak involved only a small part of the national population; \(S(0)\) is therefore interpreted as the size of an effective at-risk subpopulation. Under the retained Poisson model, the estimates were
\begin{equation*}
\widehat b=0.99,
\qquad
\widehat S(0)=61.88,
\qquad
\widehat I(0)=0.28,
\end{equation*}
with approximate \(95\%\) Wald confidence intervals
\begin{equation*}
[0.80,1.21],
\qquad
[55.54,120.87],
\qquad
[0.06,1.12],
\end{equation*}
respectively.

The fitted values \(\widehat S(0)\) and \(\widehat I(0)\) are continuous effective initial conditions, whereas the finite-state Markov chain requires integer-valued compartments. For the numerical scenarios, \(\widehat S(0)\) was rounded to the nearest integer, while \(\widehat I(0)\) was rounded subject to a lower bound of one infectious individual; thus,
\begin{equation*}
S(0)=\operatorname{round}\{\widehat S(0)\}=62,
\qquad
I(0)=\max\!\left\{1,\operatorname{round}\{\widehat I(0)\}\right\}=1,
\qquad
R(0)=0.
\end{equation*}
These integer values constitute an initialization convention for the finite-state Markov chain and should not be interpreted as separate parameter estimates. Hence, \(N=63\). The lower bound of one infectious individual avoids initializing the Markov chain in the absorbing disease-free state. All scenarios below use this initial epidemic state and \(\widehat b=0.99\) as the effective baseline transmission coefficient. In the switching scenarios, the process starts in the no-measures regime, \(J(0)=p_0=1\).

\paragraph{Baseline goodness of fit.}
Figure~\ref{fig:baseline-calibration-fit} compares the observed weekly incidence \(y_t\) with the fitted Poisson means \(\widehat\mu_t=\mu_t(\widehat\theta)\). The fitted trajectory provides a reasonable parsimonious representation of the overall epidemic curve, capturing the rise in incidence, the peak region, and the subsequent decline. The main discrepancies concern local week-to-week fluctuations and the exact height of individual weekly counts, which are smoothed by the deterministic one-phase SIR trajectory. Thus, the fitted model should be interpreted as an effective baseline calibration for the scenario analysis, rather than as a detailed reconstruction of the observed incidence path.

\begin{figure}[htbp]
    \centering
    \includegraphics[width=0.8\textwidth]{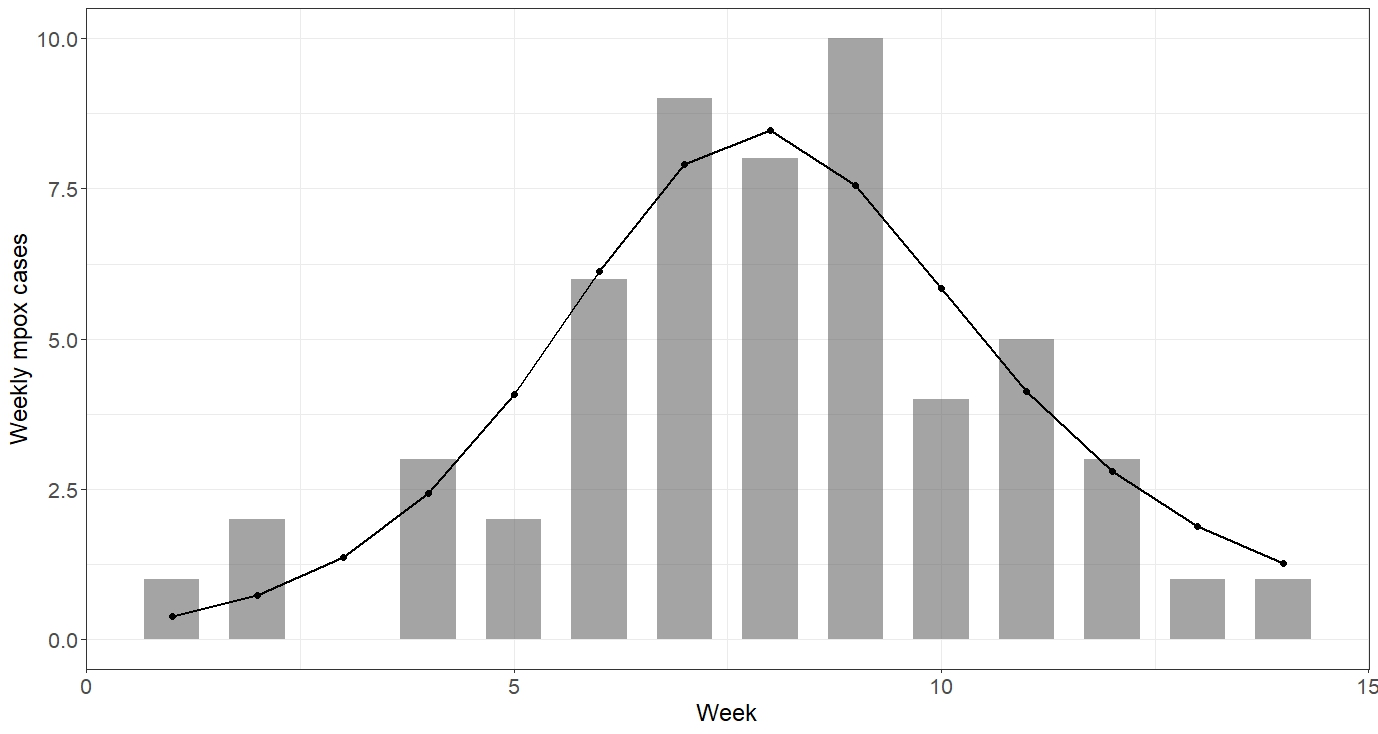}
    \caption{Observed weekly mpox incidence \(y_t\) and fitted Poisson means \(\widehat\mu_t\) under the one-phase baseline calibration. 
    Bars represent the observations, and the solid line represents the fitted means.}
    \label{fig:baseline-calibration-fit}
\end{figure}

\paragraph{Specification of scenario parameters.}
The intervention-specific values used in Scenarios~1--4 are illustrative and are chosen to separate the effects of intervention intensity, intervention timing, direct immunity acquisition, and state-dependent escalation. The mild and strict regimes are represented by \(25\%\) and \(50\%\) reductions of the fitted baseline transmission coefficient, respectively. The constant switching intensities are chosen to represent earlier and delayed escalation from no measures to strict control; for each constant switching intensity \(q\), the corresponding nominal mean waiting time is \(1/q\). Thus, \(q=0.35\) corresponds to approximately \(2.86\) weeks, whereas \(q=0.05\) corresponds to \(20\) weeks. The direct immunity-acquisition rate \(\psi=0.05\) is used as an illustrative vaccination-supported removal mechanism. The coefficients and threshold in the state-dependent escalation function~\eqref{eq:state-dependent-scenario-rate} are chosen so that the transition rate is low when the number of infectious individuals is small and increases smoothly as the infectious population grows. The resulting scenarios should therefore be interpreted as conditional model comparisons, not as estimates of the historical effects of interventions implemented in Luxembourg.

\subsection{\texorpdfstring{Scenario 1: effect of control intensity}{Scenario 1: effect of control intensity}}
\label{sec3.2}

The first scenario compares three one-phase control settings: no measures, mild measures, and strict measures. There is no vaccination-induced immunity, so that $\psi_p=0$ in all cases. There is also no phase switching. Equivalently, each curve is computed from a one-phase model.

The regimes differ only in their transmission coefficient:
\begin{equation*}
b_1=\widehat b=0.99,
\qquad
b_2=0.75\widehat b=0.743,
\qquad
b_3=0.50\widehat b=0.495.
\end{equation*}
This comparison separates transmission reduction from the timing of implementation and from vaccination.

Figure~\ref{fig:pmf_total_infections_scenario1} gives the distribution of $C^I$. Increasing control intensity moves probability mass towards smaller outbreak sizes. The mean of $C^I$ decreases from $38.52$ under no measures to $28.34$ under mild measures and to $13.07$ under strict measures. The corresponding standard deviations are $27.31$, $25.60$, and $17.08$. Thus, in this comparison, stronger control reduces both the mean infection count and its absolute dispersion.

Figure~\ref{fig:pdf_extinction_time_scenario1} gives the corresponding distributions of $T$. The mean extinction times are $15.02$, $14.16$, and $10.19$ weeks under no, mild, and strict measures, respectively. The standard deviations are $10.95$, $12.58$, and $11.86$. Thus, strict measures also reduce the expected time to extinction. The dispersion of $T$ is not monotone, however, and both controlled regimes retain a right tail. This indicates that infection count and extinction time should be examined together.

\begin{figure}[htbp]
    \centering
    \includegraphics[width=0.8\textwidth]{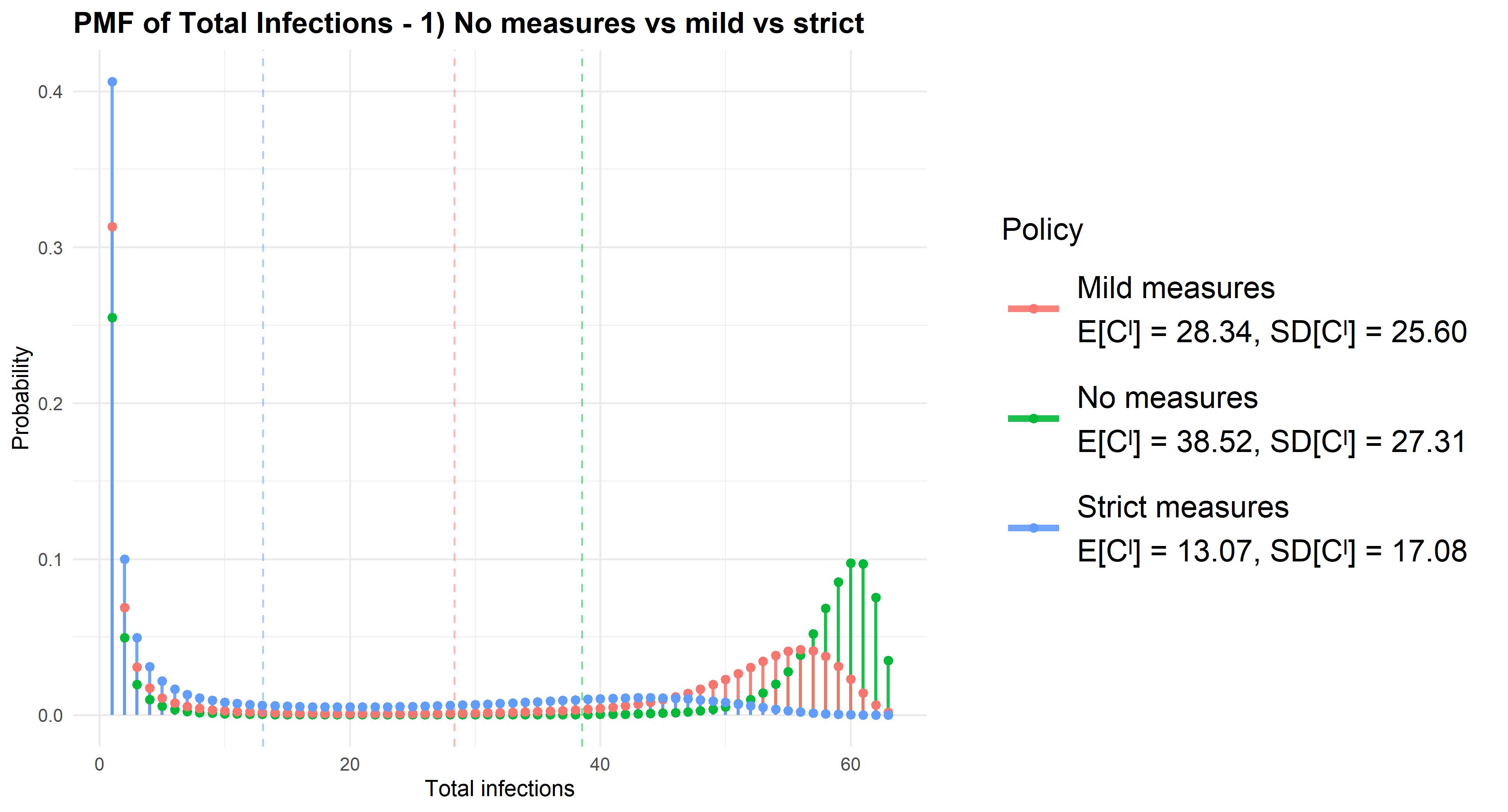}
    \caption{PMF of the total number of infected individuals under the one-phase control settings considered in Scenario~1. Dashed vertical lines indicate the corresponding means.}
    \label{fig:pmf_total_infections_scenario1}
\end{figure}

\begin{figure}[htbp]
    \centering
    \includegraphics[width=0.8\textwidth]{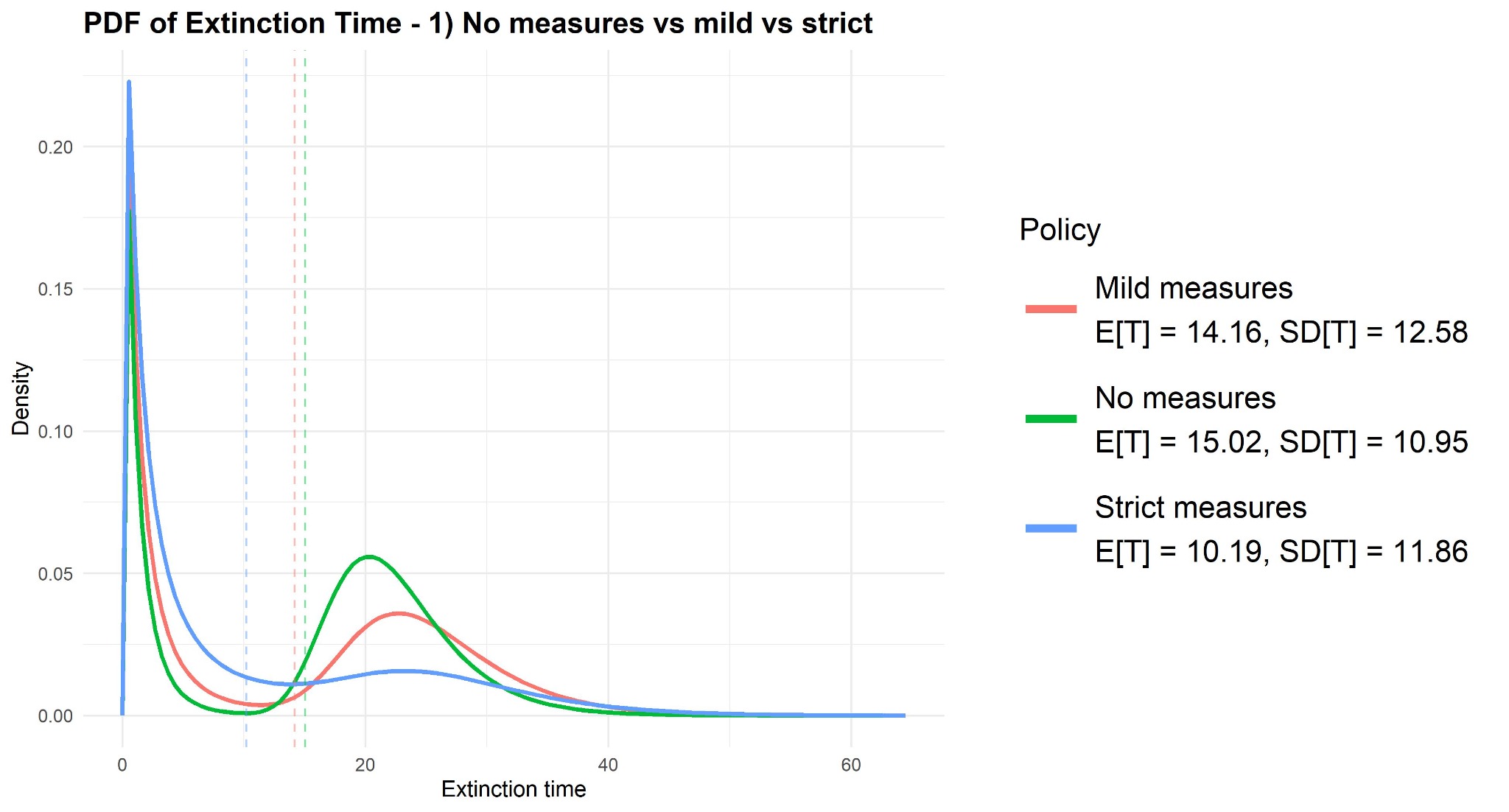}
    \caption{Density of the extinction time under the one-phase control settings considered in Scenario~1. Time is measured in weeks, and dashed vertical lines indicate the corresponding means.}
    \label{fig:pdf_extinction_time_scenario1}
\end{figure}

\FloatBarrier
\subsection{Scenario 2: effect of intervention timing}
\label{sec3.3}

The second scenario compares two timings for the introduction of strict measures. The process starts in the no-measures phase and may move to the strict phase. The reverse transition is excluded, so that $\lambda_{2,1}(s,i)=0$. The two switching rates are
\begin{equation*}
\lambda^{\mathrm{E}}_{1,2}(s,i)=0.35,
\qquad
\lambda^{\mathrm{D}}_{1,2}(s,i)=0.05,
\end{equation*}
where the superscripts $\mathrm{E}$ and $\mathrm{D}$ denote
early and delayed escalation, respectively.
These rates correspond to nominal mean waiting times of approximately $2.86$ and $20$ weeks for the phase-switching clock while the process remains in the initial regime. The transition rate is state-independent in both cases. The epidemic parameters are
\begin{equation*}
b_1=0.99,
\qquad
b_2=0.495,
\qquad
\gamma_1=\gamma_2=1/3,
\qquad
\psi_1=\psi_2=0.
\end{equation*}

The effect of intervention timing on infection burden is apparent in Figure~\ref{fig:pmf_total_infections_scenario2}: earlier strict intervention shifts the distribution of $C^I$ towards smaller values. Its mean decreases from $33.52$ under delayed intervention to $21.82$ under early intervention, while the corresponding standard deviation decreases from $26.12$ to $21.10$.

The extinction-time distributions in Figure~\ref{fig:pdf_extinction_time_scenario2} reveal a more moderate effect. The mean of $T$ decreases from $14.39$ weeks under delayed intervention to $12.99$ weeks under early intervention, whereas the standard deviation changes from $11.03$ to $11.79$ weeks. Thus, earlier intervention shortens the expected epidemic duration without reducing its dispersion, confirming that the timing of the switch affects $C^I$ and $T$ differently.

\begin{figure}[htbp]
    \centering
    \includegraphics[width=0.75\textwidth]{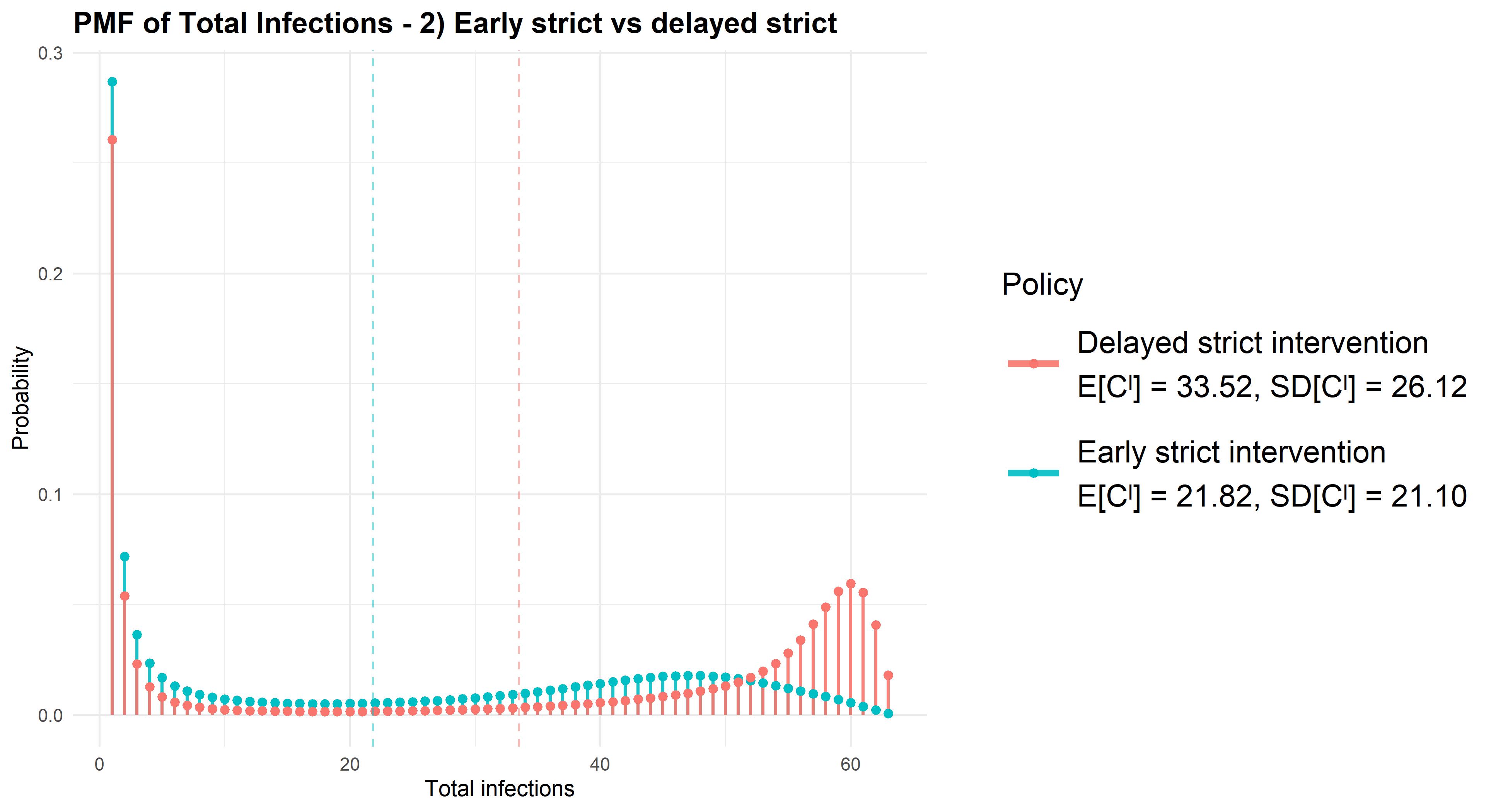}
    \caption{PMF of the total number of infected individuals under the early and delayed strict-intervention regimes considered in Scenario~2. Dashed vertical lines indicate the corresponding means.}
    \label{fig:pmf_total_infections_scenario2}
\end{figure}

\begin{figure}[htbp]
    \centering
    \includegraphics[width=0.75\textwidth]{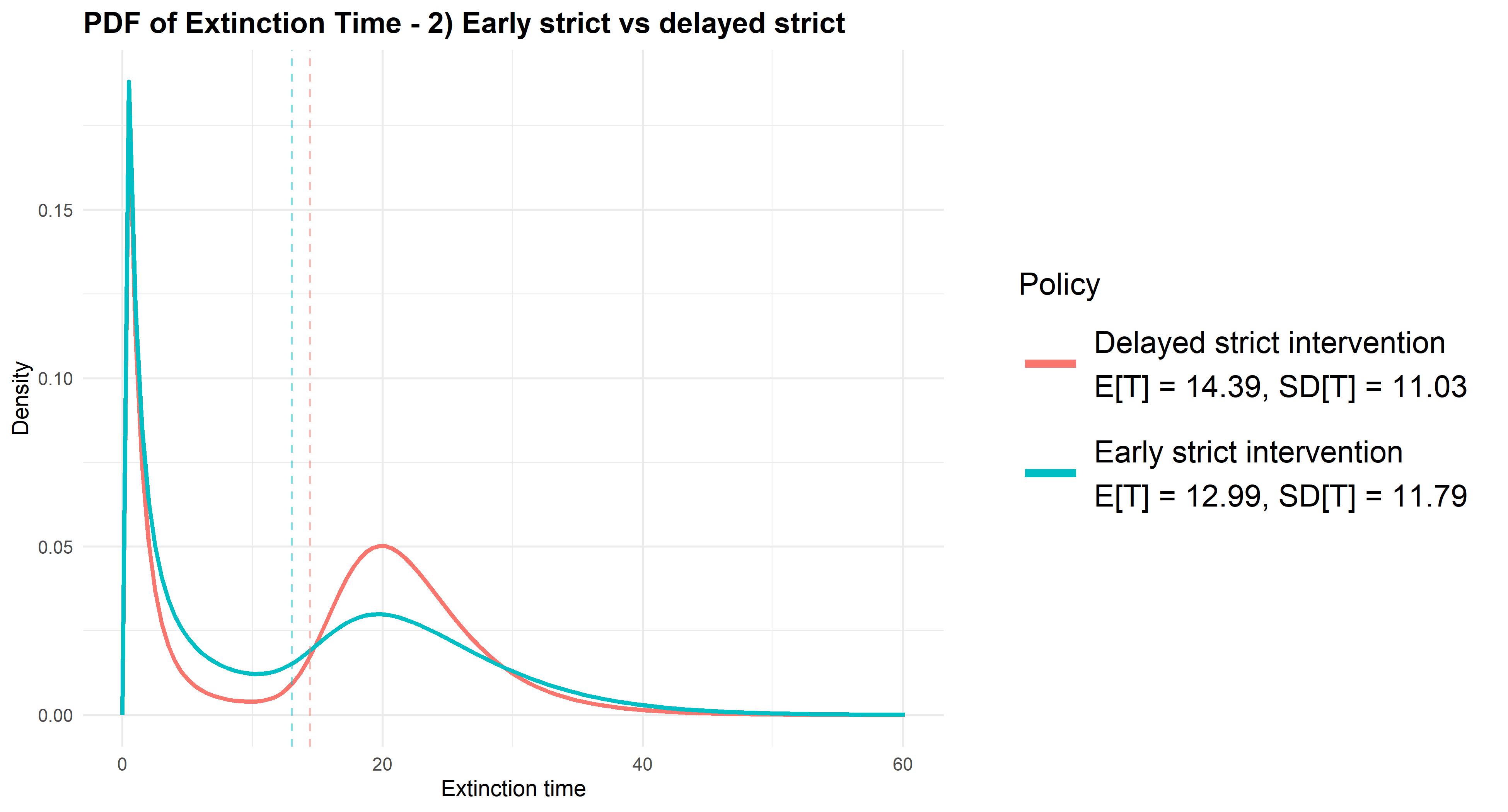}
    \caption{Density of the extinction time under the early and delayed strict-intervention regimes considered in Scenario~2. Time is measured in weeks, and dashed vertical lines indicate the corresponding means.}
    \label{fig:pdf_extinction_time_scenario2}
\end{figure}

\FloatBarrier
\subsection{Scenario 3: strict control and vaccination-supported control}
\label{sec3.4}

The third scenario compares strict control alone with strict control combined with direct immunity acquisition. There is no phase switching. The transmission and recovery parameters are
\begin{equation*}
b=0.495,
\qquad
\gamma=1/3.
\end{equation*}
The strict-control-only regime has $\psi=0$, whereas the vaccination-supported regime has $\psi=0.05$. This direct transition removes susceptible individuals from the infection process, as specified in Section~\ref{subsec:model}. The strict-control-only specification is identical to the strict-measures regime in Scenario~1 and is repeated here solely to isolate the additional effect of direct immunity acquisition.
Figure~\ref{fig:pmf_total_infections_scenario3} shows that vaccination support further reduces the infection count under strict control. The mean of $C^I$ decreases from $13.07$ to $6.08$, and the standard deviation decreases from $17.08$ to $7.55$.

Figure~\ref{fig:pdf_extinction_time_scenario3} shows the corresponding effect on extinction time. The mean of $T$ decreases from $10.19$ to $7.09$ weeks, and the standard deviation decreases from $11.86$ to $7.66$. In this scenario, adding direct immunity acquisition reduces both quantities and their dispersions.

\begin{figure}[htbp]
    \centering
    \includegraphics[width=0.80\textwidth]{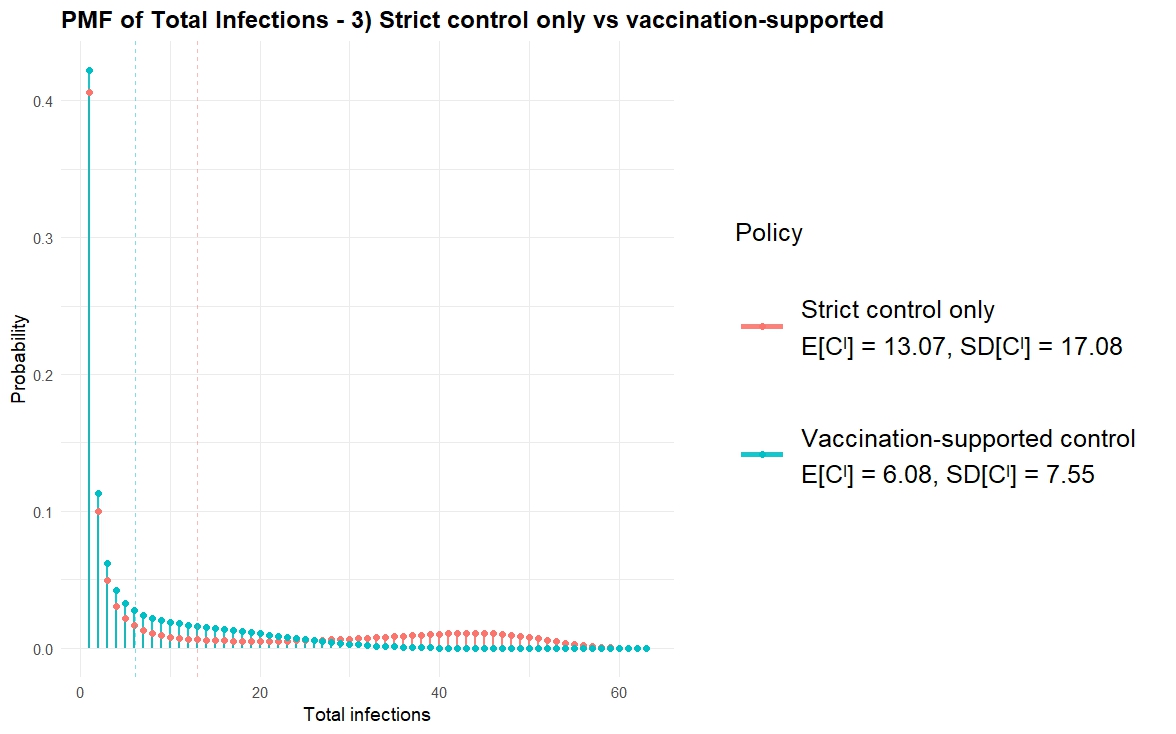}
    \caption{PMF of the total number of infected individuals under the strict-control-only and vaccination-supported regimes considered in Scenario~3. Dashed vertical lines indicate the corresponding means.}
    \label{fig:pmf_total_infections_scenario3}
\end{figure}

\begin{figure}[htbp]
    \centering
    \includegraphics[width=0.80\textwidth]{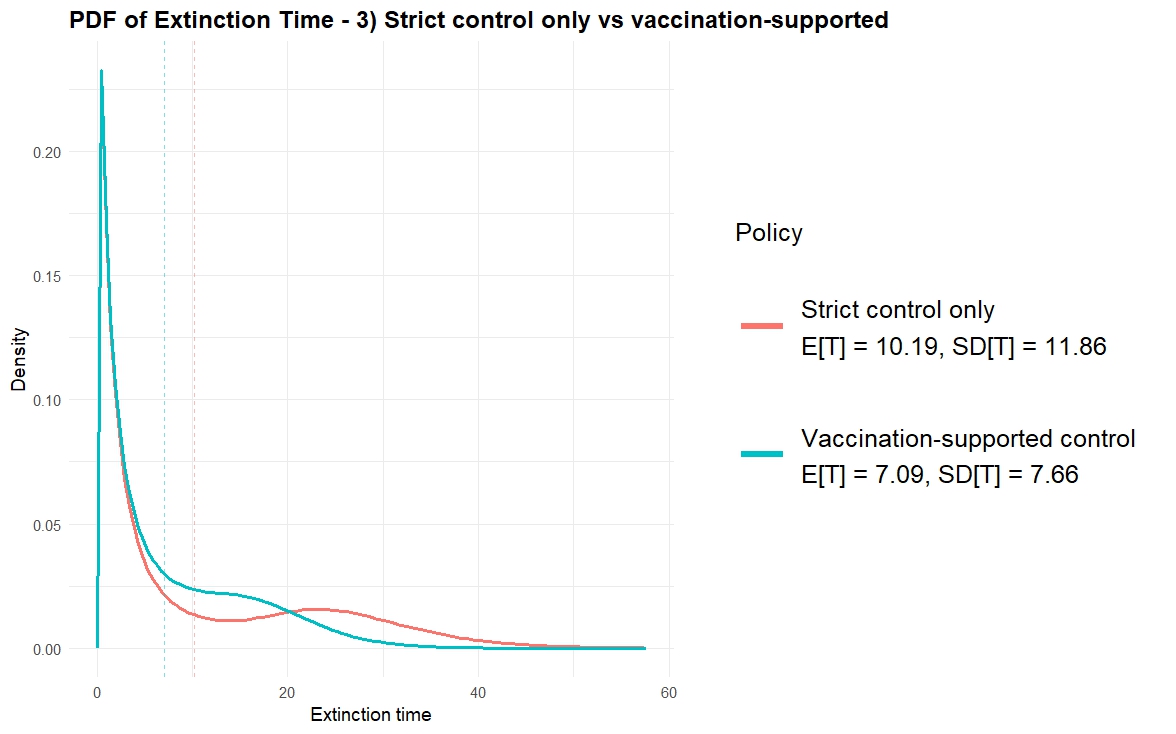}
    \caption{Density of the extinction time under the strict-control-only and vaccination-supported regimes considered in Scenario~3. Time is measured in weeks, and dashed vertical lines indicate the corresponding means.}
    \label{fig:pdf_extinction_time_scenario3}
\end{figure}

\FloatBarrier
\subsection{Scenario 4: state-dependent escalation}
\label{sec3.5}

The previous scenarios use state-independent switching. We now consider the state-dependent mechanism of Section~\ref{subsec:model}. The process starts in the no-measures phase and may move to the strict phase. Relaxation is excluded, so that $\lambda_{2,1}(s,i)=0$. The constant delayed-intervention case is used as reference:
\begin{equation*}
\lambda^{\mathrm{c}}_{1,2}(s,i)=0.05.
\end{equation*}
The state-dependent escalation rate is
\begin{equation}
\label{eq:state-dependent-scenario-rate}
\lambda^{\mathrm{sd}}_{1,2}(s,i)=0.01+0.40\frac{i^2}{i^2+6^2},
\qquad 1\leq i\leq N-s.
\end{equation}
Thus, escalation is unlikely when the number of infectious individuals is small and becomes more likely as $i$ increases. The epidemic parameters are the same as in Scenario~2:
\begin{equation*}
b_1=0.99,
\qquad
b_2=0.495,
\qquad
\gamma_1=\gamma_2=1/3,
\qquad
\psi_1=\psi_2=0.
\end{equation*}
Figure~\ref{fig:state_dependent_rate} shows the function in Eq.~\eqref{eq:state-dependent-scenario-rate}. Figures~\ref{fig:pmf_total_infections_scenario4} and~\ref{fig:pdf_extinction_time_scenario4} show the corresponding distributions of $C^I$ and $T$, respectively.

Compared with the constant delayed rate, the state-dependent rate decreases the mean of $C^I$ from $33.52$ to $28.50$, and the standard deviation from $26.12$ to $22.57$. The mean extinction times are close: $14.39$ weeks for the constant rate and $14.44$ weeks for the state-dependent rate. The corresponding standard deviations are $11.03$ and $11.37$. In this example, state-dependent escalation reduces the infection count by increasing the probability of strict control when the infectious population becomes large, while leaving the mean extinction time almost unchanged.

\begin{figure}[htbp]
    \centering
    \includegraphics[width=0.7\textwidth]{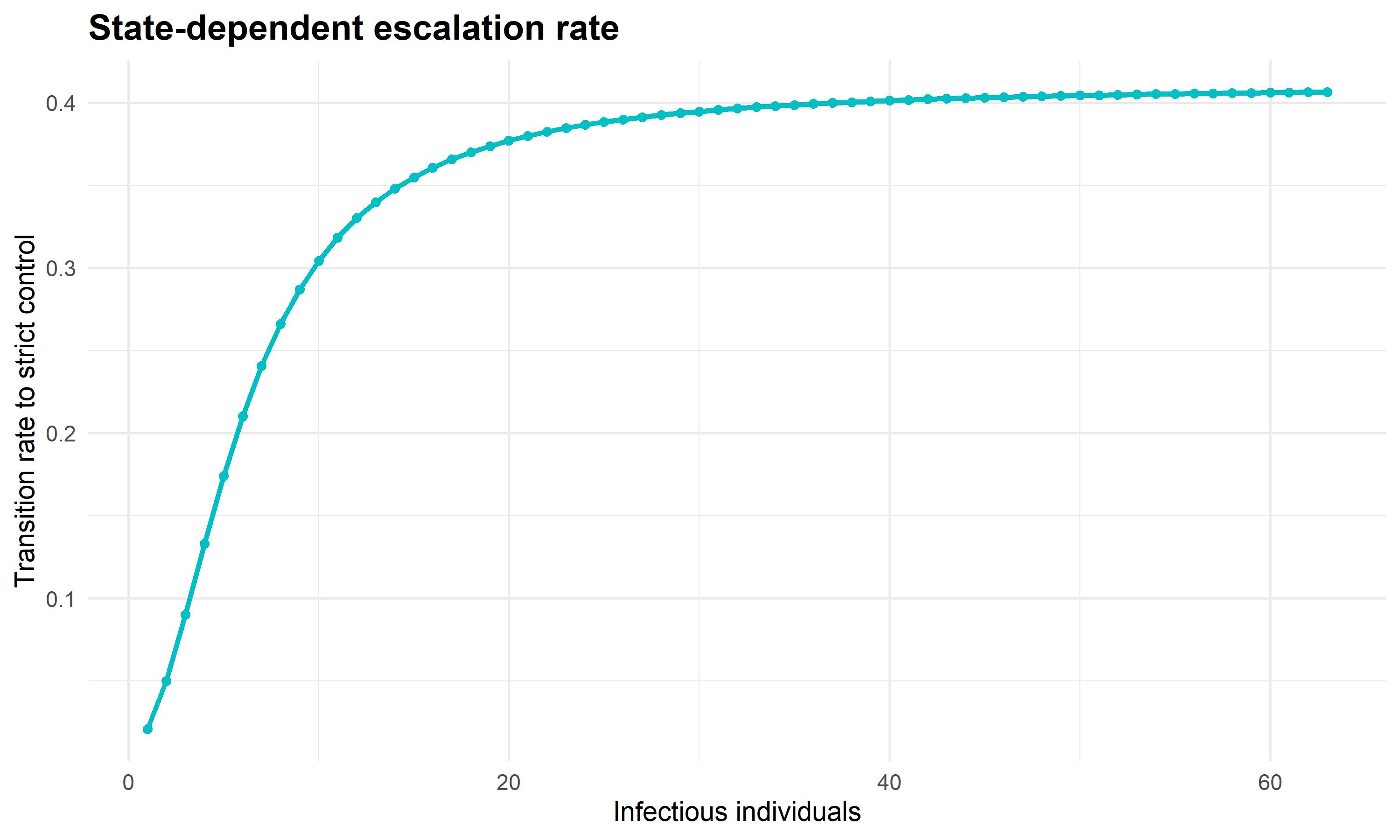}
    \caption{State-dependent escalation rate $\lambda^{\mathrm{sd}}_{1,2}(s,i)$ used in Scenario~4. Rates are expressed in $\mathrm{week}^{-1}$.}
    \label{fig:state_dependent_rate}
\end{figure}

\begin{figure}[htbp]
    \centering
    \includegraphics[width=0.8\textwidth]{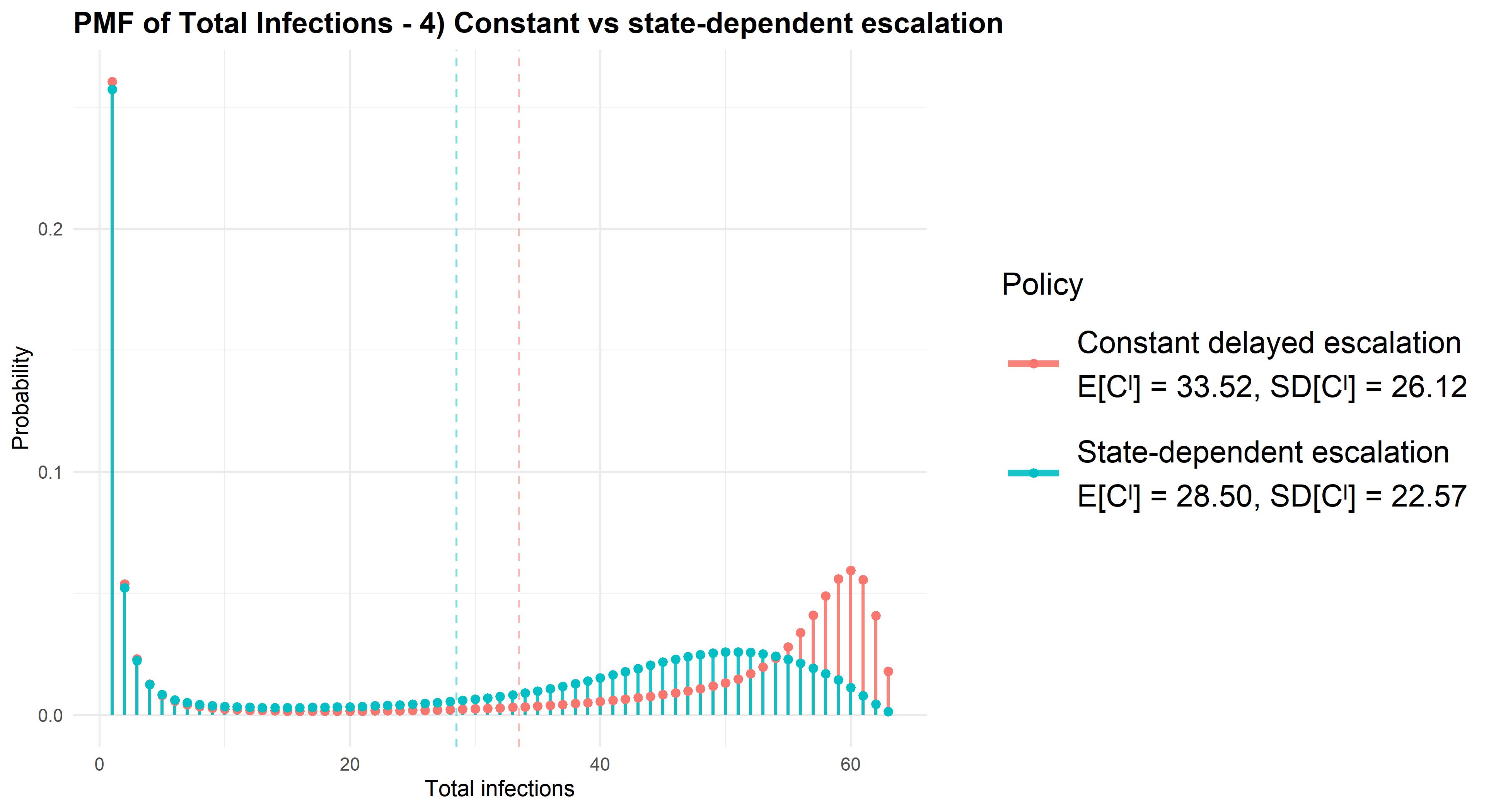}
    \caption{PMF of the total number of infected individuals under constant delayed escalation and state-dependent escalation. Dashed vertical lines indicate the corresponding means.}
    \label{fig:pmf_total_infections_scenario4}
\end{figure}

\begin{figure}[htbp]
    \centering
    \includegraphics[width=0.8\textwidth]{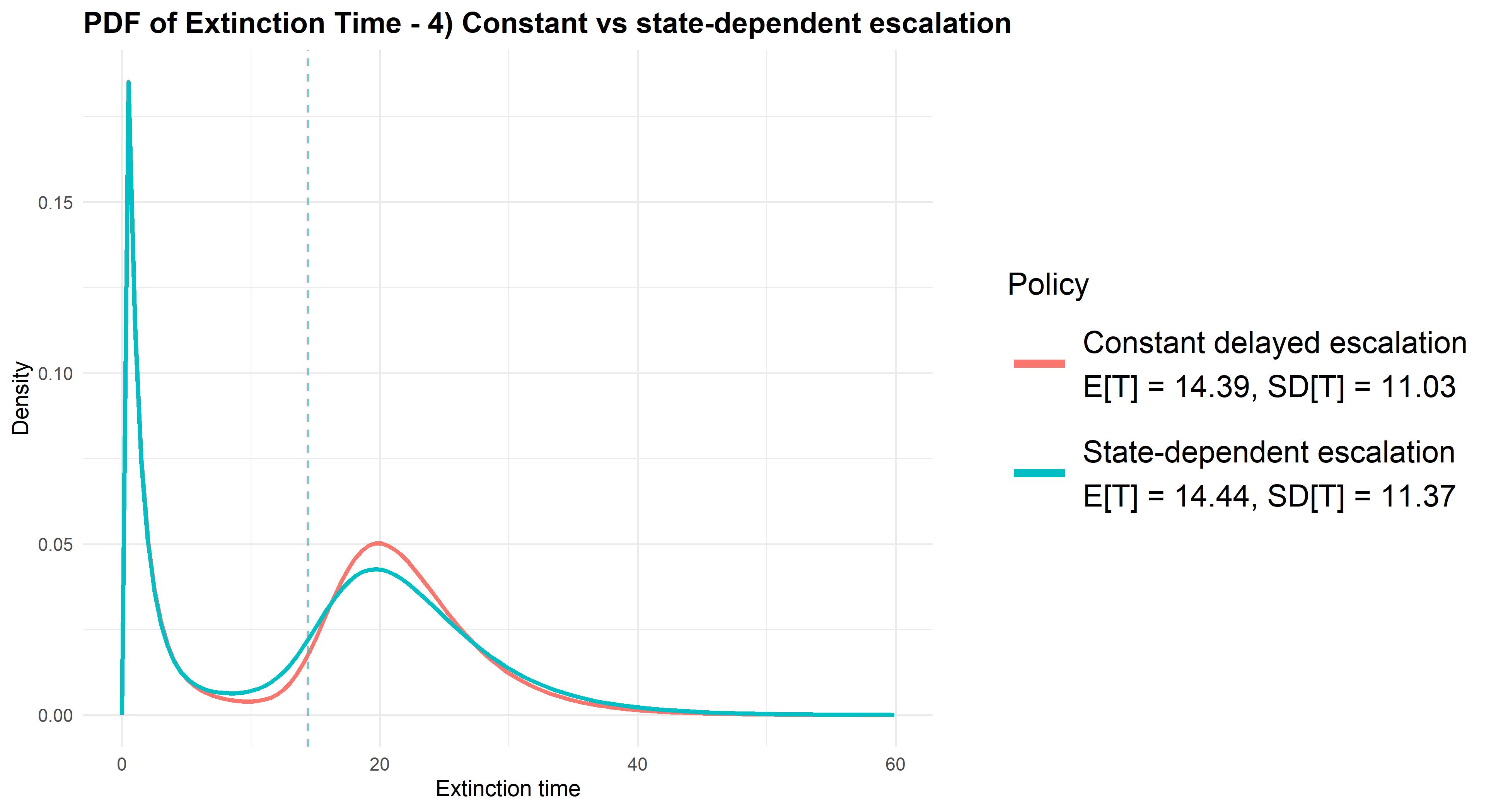}
    \caption{Density of the extinction time under constant delayed escalation and state-dependent escalation. Time is measured in weeks, and dashed vertical lines indicate the corresponding means.}
    \label{fig:pdf_extinction_time_scenario4}
\end{figure}

\FloatBarrier
\subsection{Cross-scenario synthesis}
\label{sec3.6}

Taken together, the four scenarios separate the effects of control intensity, intervention timing, direct immunity acquisition, and state-dependent escalation. Scenario~1 isolates transmission reduction; Scenario~2 compares early and delayed state-independent escalation; Scenario~3 assesses the additional effect of direct immunity acquisition; and Scenario~4 allows the escalation intensity to respond to the current infectious population. Across these mechanisms, changes in $C^I$ and $T$ need not occur in parallel: an intervention may substantially reduce infection burden while producing only a modest change in the mean extinction time or its variability. This cross-scenario comparison therefore demonstrates the value of examining epidemic burden and duration jointly rather than relying on a single summary measure.

\paragraph{Dependence between outbreak duration and size.}
The mixed moments also yield the correlation $\operatorname{Corr}(T,C^I)$ for every intervention specification. In Scenario~1, the correlations are 0.92, 0.93, and 0.93 under no measures, mild measures, and strict measures, respectively. In Scenario~2, they are 0.86 under early intervention and 0.88 under delayed intervention. In Scenario~3, the corresponding values are 0.93 for strict control alone and 0.85 for vaccination-supported control, while in Scenario~4 they are 0.88 for constant delayed escalation and 0.87 for state-dependent escalation. These values indicate a very strong and positive association between epidemic duration and infection burden. Reporting these correlations complements the marginal distributions by showing how the mechanisms alter the tendency of larger outbreaks to persist for longer periods.

\section{Discussion and conclusions} \label{sec4}
This paper develops a finite-population stochastic epidemic framework for outbreaks evolving under Markovian regime switching. The main novelty is twofold. First, the SIR epidemic process is augmented by an intervention phase, so that transmission, recovery, and direct immunity-acquisition rates may vary across regimes. The transition rates between regimes may also depend on the current epidemic state, allowing policy escalation to react probabilistically to the number of infectious individuals. This state dependence distinguishes the framework from conventional random-environment models in which regime changes are entirely exogenous to the epidemic trajectory. Second, the framework provides a unified joint distributional analysis of extinction time and infection burden, measured here through outbreak size, rather than focusing only on deterministic trajectories or mean outcomes.

The recursive formulas derived in Section~\ref{sec2} make it possible to obtain the infection-count distribution, extinction-time transforms and moments, conditional quantities, and mixed moments characterizing the dependence between outbreak duration and size. They therefore go beyond separate marginal summaries by quantifying the association between long-lasting and large outbreaks. These distributional quantities are exact up to the numerical solution of finite linear systems; only the continuous extinction-time densities displayed in Section~\ref{sec3} require numerical inversion of the Laplace--Stieltjes transform using the Abate--Whitt method. This is useful for comparing intervention mechanisms because different regimes may have similar expected durations but different dispersions or outbreak-size distributions.

The Luxembourg mpox case study demonstrates how the four specified intervention mechanisms can produce distinct changes in the distributions of infection burden and extinction time. The comparisons in Scenario~1 show that stronger transmission reduction shifts the infection burden towards smaller outbreak sizes and reduces the expected extinction time. The timing scenarios show that earlier strict intervention substantially reduces the expected number of infected individuals, while its effect on the variability of extinction time is less direct. The vaccination-supported scenario produces a further reduction in both infection burden and extinction time, illustrating how direct immunity acquisition can complement transmission-reducing control. Finally, the state-dependent escalation scenario shows that an intervention rate increasing with the current number of infectious individuals can reduce the infection count while leaving the mean extinction time almost unchanged. These results highlight why extinction time and infection burden should be analyzed jointly. The comparisons concern model-implied distributions under the specified mechanisms and should not be interpreted as causal estimates of the historical effects of interventions.

The empirical illustration is intentionally limited in scope. A baseline one-phase SIR model is calibrated to weekly incidence data, and the intervention scenarios are then evaluated conditionally on the fitted parameters and on specified regime mechanisms. This makes the role of each intervention assumption transparent, but it also means that the competing switching scenarios are not fitted directly to the data. A more detailed inferential analysis would need to assess the identifiability of regime-specific and switching parameters from aggregate incidence data and propagate both calibration and switching-parameter uncertainty into the reported distributional summaries. Finally, a minor limitation is that the first and last aggregated weeks in the mpox incidence series are partial weeks. They were retained to preserve the complete observation window, but this may introduce a small edge effect in the baseline calibration. 

Future work could extend the framework in several directions. More broadly, \citet{Hem2016} illustrated how community surveillance data and mathematical modeling can be combined to quantify an otherwise under-recognized disease burden. An important next step for the present framework would be to calibrate the full regime-switching model to richer real-world epidemic data, together with documented intervention histories, so that competing intervention scenarios can be compared empirically. Methodologically, the switching mechanism could depend on partially observed indicators, noisy surveillance data, or posterior beliefs about the epidemic state. A semi-Markov extension would also relax the exponential holding-time assumption for intervention regimes. Epidemiologically, the model could be extended to include latency, presymptomatic transmission, heterogeneous contact structure, waning immunity, or richer vaccination effects. For larger populations and richer state spaces, sparse implementations and scalable approximation methods would also be valuable. Such extensions would broaden the applicability of the framework while preserving its central aim: distributional stochastic evaluation of finite epidemics under changing intervention regimes.


\appendix

\section{Proofs for Section~\ref{sec2}}
\label{app:proofs}

\begin{proof}[Proof of Lemma~\ref{lem:absorption}]
Let $\Lambda=\max_{(s,i,p)\in S}q_{(s,i,p)}<\infty$. Uniformization by a Poisson process of rate $\Lambda$ shows that only finitely many jumps occur on each bounded time interval; hence the chain is non-explosive.

Fix $t_0>0$ and $(s,i,p)\in S_T$. Consider the event on which the next $i$ jumps are recoveries and each corresponding holding time is at most $t_0/N$. On this event the process follows
\begin{equation*}
(s,i,p),(s,i-1,p),\ldots,(s,1,p),(s,0,p)
\end{equation*}
and enters $S_a$ no later than $t_0$. At state $(s,j,p)$, $1\leq j\leq i$, the probability that the next jump is a recovery and occurs within $t_0/N$ is
\begin{equation*}
\int_0^{t_0/N}\gamma_p\,j\exp\{-q_{(s,j,p)}t\}\,\mathrm{d}t>0.
\end{equation*}
Successive conditioning gives a positive lower bound for $\mathbb{P}(T\leq t_0\,\mid\,(s,i,p))$. Since $S_T$ is finite, there exists $\delta\in(0,1)$ such that
\begin{equation*}
\inf_{(s,i,p)\in S_T}\mathbb{P}(T\leq t_0\,\mid\,(s,i,p))\geq\delta.
\end{equation*}
For $k\geq0$, the Markov property at time $k t_0$ gives
\begin{equation*}
\mathbb{P}_x(T>(k+1)t_0)
=\mathbb{E}_x\left[\mathbf{1}_{\{T>k t_0\}}\mathbb{P}_{X(k t_0)}(T>t_0)\right]
\leq(1-\delta)\mathbb{P}_x(T>k t_0),
\end{equation*}
for every $x\in S_T$. Hence
\begin{equation*}
\sup_{x\in S_T}\mathbb{P}_x(T>k t_0)\leq(1-\delta)^k,
\qquad k\geq0.
\end{equation*}
This proves that $T<\infty$ almost surely. With $c=-t_0^{-1}\log(1-\delta)>0$, the preceding inequality implies
\begin{equation*}
\sup_{x\in S_T}\mathbb{P}_x(T>t)\leq \exp\{c t_0\}\exp\{-ct\},\qquad t\geq0.
\end{equation*}
For $0<\eta<c$, Tonelli's theorem applied to $\exp\{\eta T\}-1=\int_0^T\eta\exp\{\eta t\}\,\mathrm{d}t$ yields
\begin{equation*}
\sup_{x\in S_T}\mathbb{E}_x[\exp\{\eta T\}]
\leq 1+\eta\exp\{c t_0\}\int_0^\infty\exp\{-(c-\eta)t\}\,\mathrm{d}t<\infty.
\end{equation*}
Take $\eta_T=c$. Since $T^k\leq k!\eta^{-k}\exp\{\eta T\}$ for $\eta>0$, all positive integer moments are finite.
\end{proof}

\begin{proof}[Proof of Theorem~\ref{th:joint-transform}]
Fix $x=(s,i,p)\in S_T$. Let $H$ be the first jump time and $J=X(H)$. Since $q_{(s,i,p)}\geq\gamma_p\,i\geq\gamma_*>0$, $H<\infty$ almost surely. For $y\neq x$,
\begin{equation*}
\mathbb{P}(H\in\mathrm{d}t,J=y\mid X(0)=x)=q_{x,y}\exp\{-q_{(s,i,p)}t\}\,\mathrm{d}t.
\end{equation*}
If $J=y\in S_T$, the strong Markov property at $H$ gives $T=H+T_y$ and $N^I=\kappa(x,y)+N_y^I$, where $(T_y,N_y^I)$ has the law of $(T,N^I)$ for a process started from $y$. If $J\in S_a$, then $T=H$ and the last jump is a recovery. Therefore,
\begin{align*}
\Phi_x(z,u)
={}&\sum_{\substack{y\in S_T\\y\neq x}}\int_0^\infty \exp\{-zt\}u^{\kappa(x,y)}\Phi_y(z,u)q_{x,y}\exp\{-q_{(s,i,p)}t\}\,\mathrm{d}t\\
&+\sum_{a\in S_a}\int_0^\infty \exp\{-zt\}q_{x,a}\exp\{-q_{(s,i,p)}t\}\,\mathrm{d}t.
\end{align*}
Because $\operatorname{Re}(z)>-\eta_0$ and $q_{(s,i,p)}\geq\gamma_*>\eta_0$, each integral converges and
\begin{equation}
\label{eq:appendix-first-step}
(z+q_{(s,i,p)})\Phi_x(z,u)=\sum_{\substack{y\in S_T\\y\neq x}}q_{x,y}u^{\kappa(x,y)}\Phi_y(z,u)+\sum_{a\in S_a}q_{x,a}.
\end{equation}
Writing~\eqref{eq:appendix-first-step} for all $x\in S_T$ gives~\eqref{eq:global-joint-system}.

For $s\geq1$, Eq.~\eqref{eq:appendix-first-step} becomes
\begin{align*}
(z+q_{(s,i,p)})\Phi_{s,i,p}(z,u)
={}&\gamma_p\,i\,\Phi_{s,i-1,p}(z,u)+\psi_p\,s\,\Phi_{s-1,i,p}(z,u)\\
&+u\,\dfrac{b_p}{N}\,s\,i\,\Phi_{s-1,i+1,p}(z,u)+\sum_{\substack{p'=1\\p'\neq p}}^{P}\lambda_{p,p'}(s,i)\Phi_{s,i,p'}(z,u).
\end{align*}
For $i=1$, the first term contains $\Phi_{s,0,p}(z,u)=1$. For $s=0$, the infection and immunity-acquisition terms vanish. Collecting the scalar equations over $p=1,\ldots,P$ gives~\eqref{eq:joint-recursion-zero} and~\eqref{eq:joint-recursion}.

It remains to prove nonsingularity. Let $a_{p,p}$ be the diagonal entry of $\mathbf{A}_{s,i}(z)$. Then
\begin{equation*}
a_{p,p}=z+\dfrac{b_p}{N}\,s\,i+\gamma_p\,i+\psi_p\,s+\sum_{\substack{p'=1\\p'\neq p}}^{P}\lambda_{p,p'}(s,i),
\end{equation*}
and the off-diagonal entries in row $p$ are $-\lambda_{p,p'}(s,i)$. Since $i\geq1$ and $\operatorname{Re}(z)>-\eta_0$,
\begin{equation*}
|a_{p,p}|\geq\operatorname{Re}(a_{p,p})>\sum_{\substack{p'=1\\p'\neq p}}^{P}\lambda_{p,p'}(s,i).
\end{equation*}
If $\mathbf{A}_{s,i}(z)\mathbf{v}=\mathbf{0}$ and $|v_p|=\max_r|v_r|$, the $p$th row gives
\begin{equation*}
|a_{p,p}|\,|v_p|\leq\left(\sum_{\substack{p'=1\\p'\neq p}}^{P}\lambda_{p,p'}(s,i)\right)|v_p|.
\end{equation*}
If $v_p\neq0$, the last inequality contradicts the strict inequality in the preceding display. Hence $v_p=0$ and, by maximality, $\mathbf{v}=\mathbf{0}$. Thus, $\mathbf{A}_{s,i}(z)$ is nonsingular.

The recursion is now well defined. At level $s=0$, \eqref{eq:joint-recursion-zero} determines $\boldsymbol{\Phi}_{0,i}$ successively from $i=1$ to $N$, starting from $\boldsymbol{\Phi}_{0,0}=\mathbf{1}_P$. If all vectors at levels $0,\ldots,s-1$ are known, \eqref{eq:joint-recursion} determines the vectors at level $s$ successively in $i$, starting from $\boldsymbol{\Phi}_{s,0}=\mathbf{1}_P$. The solution of the recursive system is unique and satisfies the global system, which proves the theorem.
\end{proof}

\begin{proof}[Proof of Corollary~\ref{cor:coefficient-recursion}]
If $i=0$, then $T=0$ and $N^I=0$, which gives the boundary values. If $s=0$, no infection can occur; hence $\mathbf{h}_{0,i}^{(n)}(z)=\mathbf{0}_{P\times1}$ for $n\neq0$, and the coefficient of $u^0$ in~\eqref{eq:joint-recursion-zero} gives~\eqref{eq:coefficient-zero}.

For $s\geq1$, substitute~\eqref{eq:joint-polynomial-expansion} into~\eqref{eq:joint-recursion}. The recovery and immunity-acquisition terms preserve the power of $u$, whereas
\begin{equation*}
u\sum_{m=0}^{s-1}u^m\mathbf{h}_{s-1,i+1}^{(m)}(z)=\sum_{n=1}^{s}u^n\mathbf{h}_{s-1,i+1}^{(n-1)}(z).
\end{equation*}
Equality of polynomial coefficients gives~\eqref{eq:coefficient-recursion}. Nonsingularity of $\mathbf{A}_{s,i}(z)$ and induction in $s$ and $i$ give uniqueness.
\end{proof}

\begin{proof}[Proof of the conditional identities~\eqref{eq:conditional-lst}--\eqref{eq:conditional-moments}]
By conditional expectation,
\begin{equation*}
\mathbb{E}\left[\exp\{-zT\}\,\middle|\,N^I=n,(s,i,p)\right]
=\dfrac{\mathbb{E}\left[\exp\{-zT\}\mathbf{1}_{\{N^I=n\}}\,\middle|\,(s,i,p)\right]}{\mathbb{P}(N^I=n\,\mid\,(s,i,p))},
\end{equation*}
which gives~\eqref{eq:conditional-lst}. To differentiate $h_{s,i,p}^{(n)}$, choose a compact set in $\{z:\operatorname{Re}(z)>-\eta_0\}$ and then $\rho<\eta<\eta_T$ so that $-\operatorname{Re}(z)\leq\rho$ on a neighbourhood of the compact set. The bound
\begin{equation*}
T^k\exp\{\rho T\}\leq \dfrac{k!}{(\eta-\rho)^k}\exp\{\eta T\}
\end{equation*}
and Lemma~\ref{lem:absorption} justify differentiation under the expectation. Hence
\begin{equation*}
\dfrac{\partial^k}{\partial z^k}h_{s,i,p}^{(n)}(z)=(-1)^k\mathbb{E}\left[T^k\exp\{-zT\}\mathbf{1}_{\{N^I=n\}}\,\middle|\,(s,i,p)\right].
\end{equation*}
Setting $z=0$ and dividing by $h_{s,i,p}^{(n)}(0)>0$ gives~\eqref{eq:conditional-moments}.
\end{proof}

\begin{proof}[Proof of Theorem~\ref{th:mixed-moments}]
The boundary values follow from $T=0$ and $N^I=0$ when $i=0$. For a transient state, $T^0(N^I)_0=1$.

The differentiation used below is justified by the same domination argument as in the preceding proof, together with the fact that $N^I\leq s$. Define
\begin{equation*}
\mathcal{D}_{k,r}f=(-1)^k\left.\partial_z^k\partial_u^r f(z,u)\right|_{z=0,u=1}.
\end{equation*}
Then $\mathcal{D}_{k,r}\boldsymbol{\Phi}_{s,i}=\boldsymbol{\mu}_{s,i}^{(k,r)}$.

Apply $\mathcal{D}_{k,r}$ to~\eqref{eq:joint-recursion-zero}. Since $\mathbf{A}_{0,i}(z)=\mathbf{A}_{0,i}(0)+z\mathbf{I}_P$, Leibniz' rule gives $\mathbf{A}_{0,i}(0)\boldsymbol{\mu}_{0,i}^{(k,r)}-k\boldsymbol{\mu}_{0,i}^{(k-1,r)}$ on the left-hand side, while the right-hand side gives $i\,\boldsymbol{\Gamma}\boldsymbol{\mu}_{0,i-1}^{(k,r)}$. This proves~\eqref{eq:mixed-recursion-zero}.

Apply $\mathcal{D}_{k,r}$ to~\eqref{eq:joint-recursion}. The recovery and immunity-acquisition terms give the corresponding two terms in~\eqref{eq:mixed-recursion}. For the infection term, evaluated at $(z,u)=(0,1)$, use
\begin{equation*}
\partial_u^r\left(u\boldsymbol{\Phi}_{s-1,i+1}\right)
=\partial_u^r\boldsymbol{\Phi}_{s-1,i+1}
+r\partial_u^{r-1}\boldsymbol{\Phi}_{s-1,i+1},
\end{equation*}
where the second term is omitted when $r=0$. After multiplication by $s\,i\,\mathbf{B}/N$, this gives the last term in~\eqref{eq:mixed-recursion}; hence~\eqref{eq:mixed-recursion} follows.

For uniqueness, proceed by induction on $d=k+r$. The case $d=0$ is known. Suppose all moments of total order less than $d$ have been computed and fix $(k,r)$ with $k+r=d$. At level $s=0$, \eqref{eq:mixed-recursion-zero} determines the vectors successively in $i$, because it contains a vector of total order $d-1$ and a vector of total order $d$ with infectious index $i-1$. If all vectors of order $d$ are known at levels below $s$, then \eqref{eq:mixed-recursion} determines the vectors at level $s$ successively in $i$. The matrices $\mathbf{A}_{s,i}(0)$ are nonsingular by Theorem~\ref{th:joint-transform}. This proves uniqueness.
\end{proof}

\section*{Data and code availability}
The empirical application uses publicly available mpox incidence data from the Our World in Data mpox repository \citep{EdouardMathieu2022}. The processed data and computer code used to reproduce the numerical results are available from the authors upon reasonable request.

\section*{Declaration of competing interests}
The authors declare that they have no known competing financial interests or personal relationships that could have appeared to influence the work reported in this paper.

\section*{Funding and acknowledgements}
This work was partially supported by the French National Research Agency (ANR) under grant ANR-21-CE40-0005 (HSMM-INCA). The authors gratefully acknowledge LIEC, Université de Lorraine, France, which hosted the first author as an invited researcher during part of this work. The authors also thank Michel Vaillant and the CCMS team at the Luxembourg Institute of Health (LIH) for fruitful discussions.

\bibliographystyle{apalike}
\bibliography{export_cited_alphabetical}

\end{document}